%
%
%
%
%
%
%
\documentclass[%
 jmp,
 amsmath,amssymb,
 reprint,%
]{revtex4-2}

\usepackage{graphicx}
\usepackage{dcolumn}
\usepackage{bm}

\usepackage[utf8]{inputenc}
\usepackage[T1]{fontenc}
\usepackage{mathptmx}
\usepackage{etoolbox}

\newtheorem{definition}{Definition}
\newtheorem{theorem}{Theorem}
\newtheorem{remark}{Remark}
\newenvironment{proof}[1]{\textbf{Proof.} }{$\blacksquare$ \vspace{5pt}}

\makeatletter
\def\@email#1#2{%
 \endgroup
 \patchcmd{\titleblock@produce}
  {\frontmatter@RRAPformat}
  {\frontmatter@RRAPformat{\produce@RRAP{*#1\href{mailto:#2}{#2}}}\frontmatter@RRAPformat}
  {}{}
}%
\makeatother

\def\eqa#1{\begin{equation}\begin{aligned}#1\end{aligned}\end{equation}}

\def\seqs#1{\begin{equation*}\begin{split}#1\end{split}\end{equation*}}

\begin{document}

\title{Meromorphy of solutions for a wide class of ordinary differential equations of Painlev\'e type}

\author{A. V. Domrin}
 \homepage{email: domrin@mi-ras.ru}
\affiliation{Faculty of Mechanics and Mathematics, Moscow State University, 
GSP-2, Leninskie Gory, Moscow 119992, Russian Federation;\\ Institute of Mathematics, Ufa Federal Research Centre, Russian Academy 
of Sciences, 112, Chernyshevsky Street, Ufa 450008, Russian Federation;\\ Moscow Centre of Fundamental and Applied Mathematics of Moscow State University}

\author{M. A. Shumkin} 
\homepage{email: shumkin.mikhail@yandex.ru}
\affiliation{Faculty of Mechanics and Mathematics, Moscow State University, 
GSP-2, Leninskie Gory, Moscow 119992, Russian Federation;\\ Moscow Centre of Fundamental and Applied Mathematics of Moscow State University}

\author{B. I. Suleimanov}%
\homepage{email: bisul@mail.ru}
\affiliation{Institute of Mathematics, Ufa Federal Research Centre, Russian Academy 
of Sciences, 112, Chernyshevsky Street, Ufa 450008, Russian Federation}

\date{\today}

\begin{abstract}We prove the meromorphy of solutions for a wide class of ordinary
differential equations. These equations are given by invariant manifolds of non-linear partial
differential equations integrable by the inverse scattering method. Some higher analogues
of the Painlev\' e equations are considered as examples.\end{abstract}

\maketitle


\section{Introduction }

 The six Painlev\' e equations are distinguished among the other second-order
ordinary differential equations (ODE) $w''_{zz}=f(z,w,w'_z)$ whose right-hand side is a rational
function of $w$ and $w'_z$, by the absence of non-polar movable singularities of their solutions
(that is, singularities whose position depends on the initial data).
In the simplest cases of the first Painlev\' e equation
\begin{equation}\label{PI}
w''_{zz}=6w^2+z
\end{equation}
and the second Painlev\' e equation
\begin{equation}\label{PII}
w''_{zz}=2w^3+zw+\alpha,
\end{equation}
this absence is equivalent to the property of global meromorphy of all their solutions with
arbitrary initial conditions
\begin{equation}\label{Ko}
w(z_0)=a, \qquad w'_{z}(z_0)=b,
\end{equation}
where $z_0$, $a$ and $b$ are any complex constants.

The Painlev\' e ODE are currently being applied to a wide variety of problems in mathematics 
and mathematical physics. But the problem of rigorously proving the absence of non-polar
movable singularities of their solutions turned out to be difficult. Original proofs by 
Painlev\' e and his followers appeared to be incomplete.  Gaps were also found in a number 
of subsequent attempts to give a satisfying proof. The meromorphy of all solutions of the
simplest Painlev\' e ODE (\ref{PI}) and (\ref{PII}) with initial conditions (\ref{Ko}) was 
rigorously justified for the first time only in~1999 by Hinkkanen and Laine~\cite{Laine}. 
A glimpse to the rather dramatic story of proving the absence of non-polar movable 
singularities of solutions of the Painlev\' e equations can be found in the introduction 
of the recent paper~\cite{dobish}, along with references to the literature.

A principally new proof of the global meromorphy of solutions of
(\ref{PI}) and (\ref{PII}) was also given in~\cite{dobish}. It is based on
the well-known fact that these ODE describe self-similar solutions 
of the Korteweg--de Vries (KdV) equation
\begin{equation}\label{KdV}
u_t+uu'_x+u'''_{xxx}=0\end{equation}
and the modified Korteweg--de Vries (mKdV) equation
\begin{equation}\label{MKdV}
w_t-6w^2w'_x+w'''_{xxx}=0\end{equation}
respectively. The evolution equations (\ref{KdV}) and (\ref{MKdV}) are integrable by inverse
scattering transform (IST) and, by the results in~\cite{Dom}, all their local holomorphic solutions 
can be extended to the whole $x$-plane as globally meromorphic functions of the spatial variable.
In a similar vein, the global meromorphy of solutions of the initial-value problems was proved in
\cite{dobish} for the infinite hierarchies of higher analogues of the first and second Painlev\' e
ODE and reproved for the fourth Painlev\' e~ODE.

The main result of the present paper formally consists in demonstrating in sections 
III--V
that the global meromorphy of solutions of the initial-value problems can be proved in a uniform and
simple way for a very large class of non-linear ODE which determine the so-called
{\it invariant manifolds} of solutions of KdV (\ref{KdV}), mKdV (\ref{MKdV}), the complexified
non-linear Schr\" odinger (NLS) equation
\begin{equation}\label{NS}
p'_t=-ip''_{xx}+2ip^2q, \qquad q'_t=iq''_{xx}-2ipq^2
\end{equation}
and the Sawada--Kotera (SK) equation \cite{Sav}
\begin{equation}\label{SK}
u'_t=u'''''_{xxxxx}-30uu'''_{xxx}-30u'_{x}u''_{xx}+180 u'_xu^2. 
\end{equation}
As in \cite{dobish}, the main step in the proof of global meromorphy consists in using
the meromorphic extendibility with respect to~$x$ of all local holomorphic solutions of the evolution
equations (\ref{KdV}) -- (\ref{SK}). However, in contrast with \cite{dobish}, we consider
not only ODE for self-similar solutions of IST-integrable evolution equations but also 
ODE that determine the so-called {\it symmetries} (and, more generally, invariant manifolds)
of these evolution equations. Note that the ODE for self-similar solutions are very special
representatives of this class. They correspond to the so-called {\it classical} symmetries.

Our scheme for proving the meromorphy of solutions 
can be extended to many other simultaneous solutions
of ODE and IST-integrable evolution equations. The main and almost the only obstacle is the
necessity to establish that all local holomorphic solutions of the integrable partial differential 
equation admit a global meromorphic extension with respect to one of the independent variables.
However, this was already done in \cite{dobish}, \cite{Dom}--\cite{Domk} for many
equations including (\ref{KdV})--(\ref{SK}).

The main part of the paper begins in the next section, where we give the necessary auxiliary
information concerning some properties of systems of evolution equations integrable by 
inverse scattering transform.


\section{\label{sec:level2}Symmetries and invariant manifolds of evolution equations}

Consider a system of evolution equations 
\begin{equation}\label{sise} 
(u^i)'_t=F^i(t,x,u^1,\dots, u^m_{n}), \qquad 1\leq i\leq m,
\end{equation}
where $F^i$, $1\leq i\leq m$, are locally analytic functions of
$t$, $x$, $u^m_{r}=\left(\frac{\partial}{\partial x}\right)^ru^m$.
(Here and in what follows, subscripts of
the components of solutions stand for the orders of their derivatives with respect to the 
spatial independent variable~$x$.) 

\begin{definition} \label{d1}
A symmetry of the system (\ref{sise}) is a system of evolution equations
\begin{equation}\label{sime} (u^i)'_{\tau}=G^i(t,x,u^1,\dots, u^m_{k}), \qquad 1\leq i\leq m, \end{equation}
with locally analytic right-hand sides $G^i$ satisfying the commutation condition
\begin{equation}\frac{\partial G^i}{\partial t}
+\sum_{j=1}^m \sum_{l=1}^{k}\frac{\partial G^i}{\partial u^j_{l}}\left(\frac{d}{dx}\right)^lF^j= 
\sum_{j=1}^m \sum_{l=1}^{n}\frac{\partial F^i}{\partial u^j_{l}}\left(\frac{d}{dx}\right)^lG^j. \nonumber
\end{equation}
\end{definition}

Note that $\tau$ does not occur in the right-hand side of~(\ref{sime}). 
Furthermore, it follows directly from the definition that all linear combinations 
(with constant coefficients~$a_j$) $\sum_{j=1}^ra_jG^i_j$,  $1\leq i\leq m$,
of the right-hand sides  $G^i_j$ of symmetries (\ref{sime}) of the system (\ref{sise}) are 
again right-hand sides of symmetries of this system.

Symmetries with functions of the form 
\begin{equation}\nonumber
G^i= W^i(t,x, u^1,\dots,u^m,u^1_1 ,\dots, u^m_1,F^1,\dots,F^m)
\end{equation}
(where $F^i=F^i(t,x,u^1,\dots, u^m_{n})$)
are referred to as classical symmetries of the system  (\ref{sise}) of evolution equations. 
All other functions $G^i$ determine {\it higher} symmetries of~(\ref{sise}). (The existence
of higher symmetries is characteristic for integrable systems of evolution equations.)
Systems of ODE given by the stationary parts $u^i_{\tau}=0$ of the symmetries~(\ref{sime})
are particular cases of systems of ODE that determine invariant manifolds of 
the system~(\ref{sise}). Note that self-similar solutions of partial differential equations
(integrable or not) are determined by the invariant manifolds corresponding
to classical symmetries of these equations.

\begin{definition} \label{d2}
We say that a system of ODE 
\begin{equation}\label{invm} 
u^i_{N_i} =S^i(t,x,u^1,\dots, u^m,\dots,u^1_{N_i-1},\dots,u^m_{N_i-1}),\qquad i=1,\dots m,
\end{equation} 
where the functions $S^i$ are analytic near the points
\begin{equation}\label{fix}
 (t,x,u^1,\dots, u^m,\dots,u^m_{N_i-1})=(t_0,x_0,a^1,\dots,a^m,\dots,a^m_{N_i-1})
\end{equation}
{\rm(}$t_0,x_0, a^i, a^i_{n_i}$ are arbitrary complex constants{\rm)}, determines an invariant manifold
of the system~(\ref{sise}) if the following relations hold on all solutions $u(t,x)$ of~(\ref{invm}) ($1\leq i\leq m$):
\begin{equation}\label{kritin}
 \frac {\partial S^i}{\partial t}+\sum_{j=1}^m \left[\frac {\partial S^i}{\partial u^j}F^j+\sum_{l=1}^{N_{i}-1}\frac {\partial S^i}{\partial u^j_l}\left(\frac{d}{dx}\right)^lF^j\right]=\left(\frac{d}{dx}\right)^{N_i}F^i. \end{equation}
\end{definition}

The relations (\ref{kritin}) are clearly necessary for the existence of a simultaneous local analytic 
solution of the systems (\ref{sise}) and (\ref{invm}). 

We now assume that 
$S^i$ are local analytic functions near the points (\ref{fix}) 
and the relations (\ref{kritin}) hold. Consider the set of all analytic solutions of the system of ODE
(\ref{invm}) in a neighborhood of the point $(t_0,x_0) \in \mathbb{C}^2$ satisfying the conditions
\begin{equation}\label{kosh} 
u^i_{n_i}(t_0,x_0)=a^i_{n_i} \qquad 1\leq i \leq m, \qquad 0\leq n_i\leq N_i-1. 
\end{equation}
We claim that this set contains a unique solution of (\ref{sise}). Indeed, differentiating 
each component of the system of ODE (\ref{invm}) with respect to~$t$ and using
(\ref{kritin}), we see that the functions 
$$
w^i(x,t)=(u^i)'_t-F^i(t,x,u^1,\dots, u^m_{n}), \qquad 1\leq i\leq m, 
$$
satisfy the following system of linear ODE 
in a neighborhood of the point $(t_0,x_0) \in \mathbb{C}^2$ ($1\leq i \leq m, \quad 0\leq n_i\leq N_i-1$):
\begin{equation}\label{diff} 
\left(\frac{d}{dx}\right)^{N_i}w^i=\sum_{j=1}^m \left[\frac {\partial S^i}{\partial u^j}w^j
+\sum_{l=1}^{N_{i}-1}\frac {\partial S^i}{\partial u^j_l}\left(\frac{d}{dx}\right)^lw^j \right].
\end{equation}
For every fixed $t$ in a neighborhood of~$t_0$, these ODE (in the independent variable~$x$)
are resolved with respect to the highest derivatives and their coefficients are analytic functions. 
By the uniqueness theorem, our task reduces to choosing the initial data
$$u^i_{n_i}(t,x_0)=v^{in_i}(t) \qquad (1\leq i\leq m, \quad 0\leq n_i \leq N_i-1)$$
of the considered solution of (\ref{invm}) in such a way that the functions $w^i(x,t)$
and all their derivatives in~$x$ of orders up to $N_i-1$ vanish identically in a neighborhood
of~$t_0$ for $x=x_0$.  But this can be done in a unique way, namely, letting $v^{in_i}(t)$
be the unique solution with initial conditions~(\ref{kosh}) of the system of first-order ODE
$$
(v^{in_i})'_t=G^{in_i}(t,x_0,v^{10},\dots,v^{m0},\dots,v^{1(N_1-1)},\dots,v^{m(N_m-1)}),
$$
where the right-hand side is defined as 
$\left.\left(\frac{d}{d x}\right)^{n_i}F^i(t,x,u^{1},\dots, u^{m}_n)\right|_{x=x_0}$
with all the occurring derivatives of $u^j$ of order greater than $N_j-1$ (if any) 
expressed in terms of the functions $v^{in_i}(t)$ by means of the system~(\ref{invm})
and its differential consequences. Thus we arrive at the following result.

\begin{theorem} \label{t1}
For any complex numbers $a^i_{n_i}$ $(1\leq i \leq m$, 
 $ 0\leq n_i\leq N_i-1)$ and all
sufficiently small $\delta_1,\delta_2>0$ there is a unique simultaneous analytic solution $u(t,x)$
in the bidisk
\begin{equation}\label{inip} 
\left\{ (x,t)\in\mathbb{C}^2\,|\,|t-t_0|<\delta_1,\, |x-x_0|<\delta_2 \right\}
\end{equation}
of the systems (\ref{sise}) and (\ref{invm}) with initial conditions~(\ref{kosh}).
\end{theorem}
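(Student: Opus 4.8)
The plan is to construct the simultaneous solution explicitly by first solving an auxiliary finite-dimensional system of ODE along the line $x=x_0$, then extending in $x$ via the system \eqref{invm}, and finally checking that the resulting function solves \eqref{sise}. More precisely, I would first observe that once the values $u^i_{n_i}(t,x_0)=v^{in_i}(t)$ are prescribed as analytic functions of $t$ near $t_0$, the Cauchy--Kovalevskaya theorem (or the elementary theory of ODE with analytic coefficients, since \eqref{invm} is resolved with respect to the highest $x$-derivatives and has analytic right-hand sides near the point \eqref{fix}) yields a unique analytic solution $u(t,x)$ of \eqref{invm} in some bidisk \eqref{inip}, depending analytically on the parameter $t$, with the prescribed restriction on $x=x_0$. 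This already disposes of existence and uniqueness for \eqref{invm} alone; the whole content is to pin down the free functions $v^{in_i}(t)$ so that \eqref{sise} holds as well.

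Next I would define $w^i(t,x)=(u^i)'_t-F^i(t,x,u^1,\dots,u^m_n)$ for the solution just built, and show, exactly as in the discussion preceding the theorem, that differentiating \eqref{invm} in $t$ and invoking the invariance relation \eqref{kritin} forces the vector $(w^1,\dots,w^m)$ to satisfy the linear system \eqref{diff} in the variable $x$, for each fixed $t$ near $t_0$. Since \eqref{diff} is a linear system resolved with respect to the top $x$-derivatives with coefficients analytic in $(t,x)$ near $(t_0,x_0)$, the uniqueness theorem for linear ODE shows that $w\equiv 0$ on the whole bidisk if and only if $w^i$ and all its $x$-derivatives of order $\le N_i-1$ vanish at $x=x_0$. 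Thus the requirement ``$u$ also solves \eqref{sise}'' is equivalent to the finite system of scalar conditions $\bigl(\tfrac{d}{dx}\bigr)^{n_i}w^i\big|_{x=x_0}=0$ for $1\le i\le m$, $0\le n_i\le N_i-1$.

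It remains to choose $v^{in_i}(t)$ so that these conditions hold. I would rewrite $\bigl(\tfrac{d}{dx}\bigr)^{n_i}w^i\big|_{x=x_0}$ as $(v^{in_i})'_t - G^{in_i}$, where $G^{in_i}$ is the expression obtained from $\bigl(\tfrac{d}{dx}\bigr)^{n_i}F^i$ by using \eqref{invm} and its $x$-differential consequences to eliminate every derivative $u^j_r$ with $r\ge N_j$ in favor of the $v$'s; this elimination is legitimate because the restriction of $u$ to $x=x_0$, together with all its $x$-derivatives there, is governed by \eqref{invm}. The vanishing conditions then become precisely the closed finite-dimensional autonomous-in-form system of first-order ODE $(v^{in_i})'_t = G^{in_i}(t,x_0,v^{10},\dots,v^{m(N_m-1)})$ displayed above, whose right-hand sides are analytic near the point corresponding to \eqref{fix}. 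By the Picard--Lindel\"of (or Cauchy) theorem this system has a unique analytic solution with initial data $v^{in_i}(t_0)=a^i_{n_i}$ on a possibly smaller $t$-disk, and feeding this choice back into the construction of the previous paragraphs gives a $u(t,x)$ solving both \eqref{sise} and \eqref{invm} with the initial conditions \eqref{kosh}, on a bidisk with suitably shrunk radii $\delta_1,\delta_2$. Uniqueness follows because every step --- the choice of $v^{in_i}$, then the $x$-extension via \eqref{invm} --- was forced.

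The only genuinely delicate point is the last one: one must verify that the elimination producing $G^{in_i}$ is consistent and yields right-hand sides depending only on $t$ and the finitely many functions $v^{in_i}$ (no residual dependence on higher, not-yet-eliminated derivatives), and that the point about which these $G^{in_i}$ are analytic is indeed a point near \eqref{fix}, so that the ODE theory applies. This is really a bookkeeping matter --- each application of $\tfrac{d}{dx}$ to \eqref{invm} expresses $u^j_{N_j+s}$ in terms of lower-order derivatives, so finitely many differentiations suffice to reduce $\bigl(\tfrac{d}{dx}\bigr)^{n_i}F^i$, a function of derivatives up to order $n_i+n$, to a function of the $v$'s --- but it is where all the structure of Definition~\ref{d2} is actually used, and it is the step I would write out most carefully.
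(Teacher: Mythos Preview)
Your proposal is correct and follows essentially the same route as the paper's own argument: both proceed by parametrizing solutions of \eqref{invm} through free data $v^{in_i}(t)=u^i_{n_i}(t,x_0)$, introducing the defects $w^i=(u^i)'_t-F^i$, using \eqref{kritin} to show they satisfy the linear system \eqref{diff}, and then determining the $v^{in_i}$ uniquely via the first-order $t$-ODE system obtained by eliminating high $x$-derivatives through \eqref{invm}. Your write-up is in fact slightly more careful than the paper's in flagging the bookkeeping needed for the elimination step.
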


\begin{remark} \label{r1}
{\rm This theorem was essentially proved in section 3.1 of \cite{Kap}, although in the infinitely
differentiable case instead of our analytic version. For the reader's convenience, we gave
a self-contained proof of Theorem~\ref{t1} which is simpler than the proof in~\cite{Kap}.}
\end{remark}

\section{\label{sec:level3}Meromorphy of solutions of the stationary parts of symmetries
for KdV and mKdV}

We first explain our scheme of proving the meromorphy of solutions of ODE (arising as
invariant manifolds of evolution equations) on the example of the hierarchy of the first
Painlev\' e equation~(\ref{PI}). This hierarchy can be written in the form
\begin{equation}
 \label{IerPIo} 
x-tu +L_n(u,u_1,...u_{2n})+\sum_{j=1}^{n-1} \mu_j L_j(u,u_1,...u_{2j})=0, 
\end{equation}
where $\mu_j$ are arbitrary complex numbers and $L_j$ are the so-called Lenard polynomials,
which are defined recursively as

\begin{equation} \begin{split} \label{Ierlt} L_1=u_2+\frac{u^2}{2}, \quad 
L_2=u_4+\frac{5uu_2}{3}+\frac{5(u_1)^2}{6}+\frac{5(u)^3}{18}, \\
\frac{d}{dx} L_{m+1}=\left[\left (\frac{d}{dx} \right)^{3}+\frac{2u}{3}\frac{d}{dx}+
 \frac{u_1}{3} \right]L_m \end{split}\end{equation}
 and are identically equal to zero for $u\equiv 0$. 

This hierarchy of ODE is known \cite{Moo}, \cite{Kit} as the hierarchy of massive $(2n + 1, 2)$
string equations. It is also referred to as the $P_1^n$-hierarchy of the first Painlev\' e equation
(see, for example, \cite{Kit}, \cite{Nogen}).

Differentiating both parts of the ODE (\ref{IerPIo}) with respect to~$x$, we obtain the stationary
parts of symmetries of the KdV equation~(\ref{KdV}):
\begin{equation}\label{IerPIosim} 1-tu_1 +\frac{d}{dx}\left(L_n(u,u_1,...u_{2n})+\sum_{j=1}^{n-1} \mu_j L_j(u,u_1,...u_{2j})\right)=0. \end{equation}
These are linear combinations of the stationary part of the classical Galilean symmetry
 $u'_{\tau_G}=1-tu_1$ of the KdV equation and the stationary parts of representatives
of the well-known hierarchy of commuting higher symmetries
\begin{equation} \begin{split} \label{com} u'_{\tau_1}=u_3+uu_1=K_3[u],\quad u'_{\tau_2}=\frac{d}{dx}(L_2)=K_5[u], \\ 
\quad \dots, \quad
 u'_{\tau_n}=\frac{d}{dx}(L_n)=K_{2n+1}[u], 
\end{split}\end{equation}
which can be defined in terms of the recursion operator 
\begin{equation} 
L_{KdV}=\frac{d^2}{dx^2}+\frac{2u}{3}+\frac{u_1}{3}\left(\frac{d}{dx}\right)^{-1}
\nonumber\end{equation}
by the formula $K_{2m+3}[u]=L_{KdV}K_{2m+1}[u]$ \cite[\S5.2, \S5.3]{Olv}. 

Since (\ref{IerPIosim}) is the stationary part of a symmetry of KdV, Theorem~1 yields that
for {\it arbitrary} complex constants $b_k$ and any point $(t=t_0$, $x=x_0)$ in the complex 
space $\mathbb{C}^2$ there is a unique simultaneous analytic (in $t$ and $x$)
solution of (\ref{KdV}) and (\ref{IerPIosim}) in a neighborhood (\ref{inip}) of this point
satisfying the conditions
\begin{equation} u_{k}(t_0,x_0)=b_k \qquad (0\leq k\leq 2n).
\label{Koshp}\end{equation}
By \cite{Dom}, every analytic solution of the KdV equation (\ref{KdV}) in the bidisk (\ref{inip})
 extends to a meromorphic function in the strip
\begin{equation} \label{Polos}
\left\{(x,t)\in\mathbb{C}^2\,|\,|t-t_0|<\delta_1\right\}.
\end{equation} 
Hence, for any fixed complex values of $t_0$ and $b_k$ $(0\leq k\leq2n)$,
the unique solution of the ODE (\ref{IerPIosim}) with $t=t_0$ satisfying the initial conditions
(\ref{Koshp}) for $x=x_0$ extends to a meromorphic function of $x,t$ in the strip
(\ref{Polos}).  Since the constants $b_k$ are arbitrary, we obtain the global meromorphy
with respect to~$x$ of all solutions of the ODE
$$
x-t_0u +L_n(u,u_1,...u_{2n})+\sum_{j=1}^{n-1} \mu_j L_j(u,u_1,...u_{2j})=0
$$
with arbitrary initial data 
\begin{equation} 
u_{k}(t_0,x_0)=c_k \qquad (0\leq k \leq 2n-1). 
\label{fKVn}\end{equation}
This proves the following theorem.

\begin{theorem}\label{t2}
For any values of the complex constants $x_0$, $t_0$ , $\mu_1$, $\dots,$ $\mu_{2n}\neq 0$ and $c_0$, $\dots$, $c_{2n-1}$, the solution of the Cauchy problem for the equation
 (\ref{IerPIo}) with initial data (\ref{fKVn})  is meromorphic with respect to $x$ 
in the whole complex plane~$\mathbb{C}$.  
\end{theorem}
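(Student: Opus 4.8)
The plan is to reduce the Cauchy problem for the scalar ODE~(\ref{IerPIo}) to the already-established machinery of Theorem~\ref{t1} applied to the pair consisting of the KdV equation~(\ref{KdV}) and the differentiated equation~(\ref{IerPIosim}). First I would observe that equation~(\ref{IerPIosim}) is literally $\frac{d}{dx}$ of~(\ref{IerPIo}), and that it is a polynomial combination---with constant coefficients---of the stationary parts of the classical Galilean symmetry $u'_{\tau_G}=1-tu_1$ and of the higher KdV symmetries $u'_{\tau_j}=K_{2j+1}[u]$ listed in~(\ref{com}). Since each of these is a genuine symmetry of~(\ref{KdV}) and constant-coefficient linear combinations of right-hand sides of symmetries are again symmetries (as noted after Definition~\ref{d1}), the system~(\ref{IerPIosim}) is the stationary part $u'_\tau=0$ of a symmetry of KdV, hence determines an invariant manifold in the sense of Definition~\ref{d2} with $N_1=2n$. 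The hypothesis $\mu_{2n}\neq0$ (read in context as: the top coefficient is nonzero, so $L_n$ genuinely enters and the order is exactly $2n$, while the ODE~(\ref{IerPIo}) can be solved for $u_{2n}$ near any point where the remaining data are prescribed) guarantees that~(\ref{IerPIosim}) is resolved with respect to the highest derivative $u_{2n}$ and that its right-hand side is analytic near an arbitrary point of the form~(\ref{fix}).

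Next I would invoke Theorem~\ref{t1}: given $x_0,t_0\in\mathbb{C}$ and the initial data~(\ref{fKVn}), extend $(c_0,\dots,c_{2n-1})$ to a full set of Cauchy data $(b_0,\dots,b_{2n})$ for~(\ref{IerPIosim}) by using the ODE~(\ref{IerPIo}) itself at $(t_0,x_0)$ to define $b_{2n}$, and set $b_k=c_k$ for $k\le 2n-1$. Theorem~\ref{t1} then produces a unique function $u(t,x)$, analytic in a bidisk~(\ref{inip}), that solves both~(\ref{KdV}) and~(\ref{IerPIosim}) with $u_k(t_0,x_0)=b_k$. Restricting to the fiber $t=t_0$ gives an analytic solution of~(\ref{IerPIosim}) with the prescribed initial data; because the value $b_{2n}$ was chosen to make~(\ref{IerPIo}) hold at the single point $(t_0,x_0)$ and~(\ref{IerPIosim}) is the $x$-derivative of~(\ref{IerPIo}), the quantity $x-t_0u+L_n+\sum\mu_jL_j$ is constant in $x$ and vanishes at $x_0$, so $u(t_0,\cdot)$ in fact solves the original ODE~(\ref{IerPIo}) with data~(\ref{fKVn}).

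The decisive step is the global meromorphic continuation in~$x$. Here I would cite~\cite{Dom}: every solution of the KdV equation analytic in a bidisk~(\ref{inip}) extends to a meromorphic function on the whole strip~(\ref{Polos}), i.e. for every fixed $t$ with $|t-t_0|<\delta_1$ the function $x\mapsto u(t,x)$ is meromorphic on all of~$\mathbb{C}$. Applying this to the simultaneous solution $u(t,x)$ constructed above and then fixing $t=t_0$ yields that the solution of~(\ref{IerPIo}) with data~(\ref{fKVn}) is meromorphic in~$x$ on the entire plane. Since $x_0$, $t_0$, the $\mu_j$, and the $c_k$ were arbitrary (subject only to $\mu_{2n}\neq0$), this is exactly the assertion of Theorem~\ref{t2}.

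The main obstacle, and the only nontrivial input, is the global $x$-meromorphy of KdV solutions supplied by~\cite{Dom}; everything else is bookkeeping---checking that~(\ref{IerPIosim}) really is the stationary part of a symmetry (so that Theorem~\ref{t1} applies), that the extra top-order datum $b_{2n}$ can be legitimately pinned down from~(\ref{IerPIo}), and that passing from the PDE solution back to the ODE solution loses nothing. One should also take a moment to confirm that the local analyticity hypothesis on $S^1$ near the points~(\ref{fix}) is met, which is where the nonvanishing of the leading coefficient (here normalized via $\mu_{2n}\neq0$) is used.
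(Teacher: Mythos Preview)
Your argument is correct and mirrors the paper's proof exactly: differentiate~(\ref{IerPIo}) to get the stationary symmetry equation~(\ref{IerPIosim}), apply Theorem~\ref{t1} to obtain a simultaneous local solution with KdV, invoke~\cite{Dom} for global $x$-meromorphy, and pass back from~(\ref{IerPIosim}) to~(\ref{IerPIo}) (your explicit integration-constant argument for this last step is in fact more detailed than the paper's one-line appeal to arbitrariness of the $b_k$). The only slip is a bookkeeping one: since $L_n$ already contains $u_{2n}$, the differentiated equation~(\ref{IerPIosim}) has highest derivative $u_{2n+1}$, so $N_1=2n+1$ (not $2n$) and the Cauchy data for it are $(b_0,\dots,b_{2n})$ as in~(\ref{Koshp}), exactly matching your extension step.
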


Theorem~\ref{t2} was already proved in our previous paper~\cite{dobish}, but the present
proof is considerably simpler (for $n > 1$) since there is no need to represent the ODE 
(\ref{IerPIo}) as reductions of higher analogues of the KdV equation and give a separate
proof of meromorphic extendability of solutions of these analogues.

\begin{remark}\label{r2}
{\rm
Besides the ODE $P_1^1$, which is clearly equivalent to the first Painlev\' e equation
(\ref{PI}), the only rigorous and detailed proof of the meromorphy of solutions of~(\ref{IerPIo}) 
prior to \cite{dobish} was given for $P_1^2$ in \cite{Shimt}. We note that
 Shimomura~\cite{Shimn} announced a proof of meromorphy of solutions of  (\ref{IerPIo}) 
for all positive integers~$n$ in~2004, but the subsequent authors
 \cite{Grom}--\cite{Ste} still regard the question of rigorous proof for $n>2$ as open.
}
\end{remark}

\begin{remark}\label{r3}
{\rm
Simultaneous solutions of the ODE (\ref{IerPIo}) for $n=2$ and the KdV equation (\ref{KdV})
are equivalent \cite{Shimt} to solutions of a pair of isomonodromic Hamiltonian systems
$H^{9/2}$ with two degrees of freedom belonging to a hierarchy (written out by Kimura~\cite{Kim})
of degenerations of the classical two-dimensional Garnier system.
}
\end{remark}

\begin{remark}\label{r4}
{\rm
By \cite{Bis}, the set of these simultaneous solutions contains 
the familiar Gurevich--Pitaevskii solution
(introduced in \cite{Gur}, \cite{Pit}) of the KdV equation. This special solution describes the
generation of the so-called dispersive shock waves for a wide class of problems with small
dispersion in the leading order. The same special solution of~(\ref{IerPIo})  was considered
in the early nineties of the last century (see \cite{Bis}, \cite{Kupl}, \cite{Gapl}) in connection
with the problems of quantum gravitation theory (\cite{Moo}, \cite{Bre}, \cite{Sei}) and the
description of blow-up regimes in the random matrix theory (\cite{Its}, \cite{Van}).
Many publications in the last quarter century were devoted to studying various properties
of simultaneous solutions of ODE and KdV 
(first of all, the Gurevich--Pitaevskii special solution); see \cite{Kud}--\cite{Shav} and
references therein in addition to those given above. 
Other representatives of the hierarchy of simultaneous solutions of
 (\ref{IerPIo}) and (\ref{KdV}) are also related to the description of formation of dispersive 
shock waves in degenerate cases \cite{Nogen}, \cite{Kud}.
}
\end{remark}

Our scheme can be used in an equally simple way to prove the meromorphy in~$x$
of all solutions of initial-value problems for the following hierarchy of stationary parts 
of symmetries of~KdV (\ref{KdV}):
\begin{equation}\label{IerPIIksim} 
2u+xu_1-3t(u_3+uu_1) +\frac{d}{dx}\left(\sum_{j=1}^{n} \mu_j L_j(u,u_1,...u_{2j})\right)=0, \end{equation}
where $\mu_j$ are arbitrary complex constants. Here the stationary part
of the classical scaling symmetry $u'_{\tau_r}= 2u+xu_1-3t(u_3+uu_1)$ \cite[\S5.2]{Olv} 
is added to a linear combination of the stationary parts of the symmetries (\ref{com}). 
Being the stationary part of a symmetry of~KdV, the ODE~(\ref{IerPIIksim}) determines
an invariant manifold of~KdV. Assuming that $\mu_n\neq0$, we see from Theorem \ref{t1}
that every initial condition~(\ref{Koshp})
determines a unique simultaneous holomorphic solution of the ODE (\ref{IerPIIksim}) and 
the KdV equation~(\ref{KdV}) in a neighbourhood (\ref{inip}) of an arbitrary point $(t_0,x_0)$.
By~\cite{Dom}, any such solution of~(\ref{KdV}) extends meromorphically to the 
strip~(\ref{Polos}). Hence all solutions of (\ref{IerPIIksim}) are meromorphic 
with respect to $x,t$ in this strip and, in particular, meromorphic on the whole $x$-plane 
for every fixed~$t$. This proves the following theorem.

\begin{theorem}\label{t3}
Let $\mu_1,\dots,\mu_n, x_0, t_0$ and $b_0,\dots,b_{2n}$ be any complex constants.
Suppose that $n>1$ and $\mu_n\neq0$. Then the unique solution of the ODE (\ref{IerPIIksim}) 
with $t=t_0$ and
initial conditions (\ref{Koshp}) is meromorphic with respect to~$x$ on the whole complex 
plane~$\mathbb{C}$.  
\end{theorem}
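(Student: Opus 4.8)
The plan is to run, essentially verbatim, the scheme that established Theorem~\ref{t2}, with the ODE (\ref{IerPIIksim}) in place of (\ref{IerPIosim}). First I would record that the left-hand side of (\ref{IerPIIksim}) is the right-hand side $G[u]=2u+xu_1-3t(u_3+uu_1)+\frac{d}{dx}\sum_{j=1}^{n}\mu_j L_j(u,u_1,\dots,u_{2j})$ of a symmetry of the KdV equation (\ref{KdV}): it is a constant-coefficient linear combination of the classical scaling symmetry and of the commuting higher symmetries (\ref{com}), and by the closure of symmetries under such combinations noted right after Definition~\ref{d1} it is again a symmetry of KdV. Hence (\ref{IerPIIksim}) is the stationary part $u'_{\tau}=0$ of this symmetry, so by the discussion in Section~\ref{sec:level2} it determines an invariant manifold of KdV and the compatibility relations (\ref{kritin}) hold automatically on all its solutions.

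Next I would put (\ref{IerPIIksim}) into the normal form of Definition~\ref{d2}. Since each Lenard polynomial $L_j$ has leading term $u_{2j}$, the summand $\frac{d}{dx}(\mu_n L_n)$ contributes $\mu_n u_{2n+1}$, whereas every remaining term of (\ref{IerPIIksim}) — and here it matters that $2n+1>3$, so that the term $-3t u_3$ is among these remaining ones — involves $x$-derivatives of order at most $2n$. Therefore, using $\mu_n\neq0$, the equation (\ref{IerPIIksim}) can be solved for $u_{2n+1}$ as a function analytic in $t,x,u,u_1,\dots,u_{2n}$ near any point; that is, it has the shape (\ref{invm}) with $m=1$ and $N=2n+1$, and the prescribed numbers $b_0,\dots,b_{2n}$ in (\ref{Koshp}) are exactly the $N$ initial values allowed in (\ref{kosh}). (This is the only role of the hypothesis $n>1$: it guarantees that the coefficient of the top derivative is the nonzero constant $\mu_n$ rather than a $t$-dependent quantity that could vanish.)

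Now Theorem~\ref{t1} applies and produces, for the prescribed $t_0,x_0,b_0,\dots,b_{2n}$, a unique analytic function $u(t,x)$ on some bidisk (\ref{inip}) solving both (\ref{KdV}) and (\ref{IerPIIksim}) and satisfying (\ref{Koshp}). By the meromorphic-continuation theorem of \cite{Dom}, this solution of KdV extends to a function meromorphic in $x,t$ throughout the strip (\ref{Polos}); in particular $u(t_0,\cdot)$ is meromorphic in $x$ on all of $\mathbb{C}$. Finally, putting $t=t_0$ in (\ref{IerPIIksim}) gives an ODE in $x$ of order $2n+1$, resolved with respect to $u_{2n+1}$ with analytic right-hand side, so its Cauchy problem with data (\ref{Koshp}) has a unique local analytic solution; that solution is the germ of $u(t_0,\cdot)$ at $x_0$, and by the identity theorem for meromorphic functions the unique solution of (\ref{IerPIIksim}) with $t=t_0$ and data (\ref{Koshp}) coincides with the globally meromorphic function $u(t_0,\cdot)$. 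This is precisely the assertion of the theorem.

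I do not expect a genuine obstacle here: once Theorem~\ref{t1} and the extendibility result of \cite{Dom} are available, the proof is a short bookkeeping argument, exactly parallel to that of Theorem~\ref{t2}. The one point demanding a little care is the step in the second paragraph — checking that after imposing $\mu_n\neq0$ and $n>1$ the ODE (\ref{IerPIIksim}) is literally of the form (\ref{invm}) with right-hand side analytic near the given initial point, so that the framework of Section~\ref{sec:level2} applies verbatim.
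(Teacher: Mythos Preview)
Your proposal is correct and follows exactly the paper's own argument: identify (\ref{IerPIIksim}) as the stationary part of a KdV symmetry, invoke Theorem~\ref{t1} to get a simultaneous local holomorphic solution of (\ref{IerPIIksim}) and (\ref{KdV}), and then apply the meromorphic extension result of~\cite{Dom}. You are in fact slightly more explicit than the paper in spelling out why the hypothesis $n>1$ is needed (so that the coefficient of the top derivative $u_{2n+1}$ is the nonzero constant $\mu_n$ rather than $\mu_1-3t$), but this is just added detail, not a different route.
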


The lowest representative of the hierarchy of ODE (\ref{IerPIIksim}) corresponds to the case
when $\mu_1=\dots=\mu_n=0$. It coincides with the stationary part of the classical scaling
symmetry of the KdV equation (\ref{KdV}),
\begin{equation} 2u+xu_1-3t(u_3+uu_1)=0. \label{scail}\end{equation}
The proof of Theorem~\ref{t3} shows that if $t \neq 0$, then any solution of this equation
(that is, the solution with initial conditions $u_k|_{x=x_0}= c_k$, $0\leq k\leq2$, for arbitrary 
complex constants $x_0,c_0,c_1,c_2$) is also meromorphic on the whole plane. The change
of variables
$$
z=\frac{x}{(-3t)^{1/3}},\qquad u=(-3t)^{2/3}v(z)
$$
reduces this ODE as well as the KdV equation~(\ref{KdV}) to the $t$-independent ODE
$$ 
v'''_{zzz} + (v+z)v'_z + 2v= 0.
$$
The familiar Miura transform
$v=6(f'_z-f^2)$ sends the solutions of this ODE to solutions of the second Painlev\' e
equation~(\ref{PII}), which has many applications to the problems of non-linear mathematical
physics.

\begin{remark}\label{r5}
{\rm Simultaneous solutions of the KdV equation (\ref{KdV}) and the representative of the hierarchy
(\ref{IerPIIksim}) with $n=2$ and $\mu_2\neq0$ determine simultaneous solutions (see
\cite{Pav}) of a pair of isomonodromic Hamiltonian systems $H^{7/2+1}$ with two degrees of
freedom belonging to Kawamuko's list~\cite{Kawam}. This list of Hamiltonian systems also
consists of degenerations of the classical isomonodromic Garnier pair, but they are different 
from the hierarchy of its degenerations written out in~\cite{Kim}. 
}
\end{remark}

\begin{remark}\label{r6}
{\rm Special simultaneous solutions of  (\ref{KdV}) and the representative of (\ref{IerPIIksim}) 
with $n=2$ and $\mu_2\neq0$ were studied in \cite{Garsu}--\cite{Garut}. They give a
universal description~\cite{Garsu} of the corrective influence of small dispersion on the process
of transforming weak discontinuities of solutions of ideal hydrodynamic equations into
strong discontinuities. It is also clear that solutions of other higher representatives of
the hierarchy of ODE (\ref{IerPIIksim}) also describe this influence in certain degenerate 
cases, for example, in the case considered in \cite[\S3.2]{Kamch}.
}
\end{remark}


Note that the proof in  \cite{dobish} of the global meromorphy property for all solutions of 
the higher analogues of the second Painlev\' e equation~(\ref{PII}) can also be simplified. 
Indeed, differentiating these analogues with respect to~$x$, we have the following
hierarchy of  ODE ($n=2,3,\dots $):
\begin{equation}\label{IePtwoo} 
\frac{d}{dx} \left(t(2w^3-w_2)+xw+\sum_{j=2}^{n}{\nu_j}P_{j}(w,w_1,\dots,w_{2j})\right)=0,
\end{equation}
where $\nu_j$ are arbitrary complex numbers, $\nu_n\neq 0$. These ODE are obtained by
adding the stationary part of the classical scaling symmetry
$w'_{\tau}=3t(w_3-6w^2w_1)+xw_x+w$ of the mKdV equation (\ref{MKdV}) 
to a linear combination of the stationary parts of higher symmetries of 
this equation (see, for example, \cite{Cl}),
$$ 
w'_{\tau_n}=\frac{d}{dx}P_{n}(w,w_1,\dots,w_{2n})=
\frac{d}{dx}\left(\frac{d}{dx}+w\right)L_n[w_x-w^2],
$$
where $L_n[u]=L_n(u,u_1,...u_{2n})$ are the Lenard polynomials (\ref{Ierlt}). Using
Theorem~\ref{t1} and the result of~\cite{Dom} about the global meromorphic extensibility with
respect to~$x$ of all local holomorphic solutions of the mKdV equation, we see that for
arbitrary complex parameters $t=t_0$, $x_0$ and $a_i$ $(i=0,1,\dots,2n)$ the solution 
of~(\ref{IePtwoo}) with initial conditions $u_{i}(x_0)=a_i$ $(i=0,1,\dots,2n)$ is meromorphic
on the whole complex $x$-plane. Hence, for arbitrary complex values of 
$\alpha_n$, $t=t_0$, $x_0$ and $a_i$ $(i=0,1,\dots,2n-1)$, the solution of the $n$\,th
equation of the second Painlev\'e hierarchy
\begin{equation}\label{Ptooi} 
t(w_2-2w^3)+xw+\sum_{j=1}^n\nu_j P_{j}(w,w_1,\dots,w_{2j})=\alpha_n
\end{equation}
with initial conditions $u_{i}(x_0)=a_i$ $(i=0,1,\dots,2n-1)$ is meromorphic on the
whole complex $x$-plane. Moreover, any local holomorphic solution of (\ref{Ptooi}) is 
meromorphic with respect to $x,t$ in some strip~(\ref{Polos}).

\begin{remark}\label{r7}
{\rm The hierarchy of the second Painlev\' e equation and its relation to the hierarchy of ODE
(\ref{IerPIIksim}) was earlier described in \cite{ClJ} --\cite{Mo}.
} 
\end{remark}

Our scheme can also be used to prove the global meromorphy of solutions of ODE given by
the stationary parts of {\it non-local} symmetries of the KdV equation~(\ref{KdV}). (The right-hand
side of a non-local symmetry can depend not only on $t$, $x$, $u$, $u_1$, $\dots$, $u_m$
but also on the integrals of polynomials in the function $u$ and its derivatives with respect to~$x$.)
For example, consider the simplest non-local symmetry of~(\ref{KdV}), the so-called master
symmetry
\eqa{\label{mask}
u'_{\tau_{mas}}=-3t\left[u_5+\frac{5uu_3}{3}+\frac{10u_1u_2}{3}+\frac{5u^2u_1}{6}\right]+\\+x(u_3+uu_1)+4u_2+\frac{4u^2}{3}+\frac{u_1v}{3}=G^1(t,x,u,\dots,u_5,v),} 
where
\begin{eqnarray}
\label{npk}
v_x&=&u,\\
\label{ntk}v_t&=&-u_2-\frac{u^2}{2}.\end{eqnarray}
The right-hand side $G^1(t,x,u,\dots,u_5,v)$ of this symmetry, which was introduced
in the end of section~2 of~\cite{Ibsh}, satisfies the following 
identity 
\begin{equation}\begin{split} 
\frac {\partial G^1}{\partial t}-(u_3+uu_1)\frac {\partial G^1}{\partial u}-\left(u_2+\frac{u^2}{2}\right)\frac {\partial G^1}{\partial v}-\\ \label{uskok}-\sum_{j=1}^5\frac {\partial G^1}{\partial u_j}\left(\frac{d}{dx}\right)^j(u_3+uu_1)= 
-u_1G^1-u\frac {d G^1}{d x}-\frac {d^3 G^1}{d x^3}. 
\end{split}\end{equation} 
The stationary part $G^1(t,x,u,\dots,u_5,v)=0$ of the master symmetry (\ref{mask}),
after dividing both parts by $(-3t)$, takes the form 
\begin{equation}\label{stank}g^1(t,x,u,\dots,u_5,v)=0,\end{equation}
where the function
\begin{eqnarray}
 &g^1(t,x,u,\dots,u_5,v)=u_5+\frac{5uu_3}{3}+\frac{10u_1u_2}{3}+\frac{5u^2u_1}{6}-
\nonumber \\
&
-\frac{1}{3t}\left\{x(u_3+uu_1)+4u_2+\frac{4u^2}{3}+\frac{u_1v}{3}\right\}
\nonumber\end{eqnarray}
is not the right-hand side of a symmetry of KdV. Nevertheless, by
(\ref{uskok}), the following equality holds on solutions of the ODE (\ref{stank}):
\begin{eqnarray}
\frac {\partial g^1}{\partial t}-(u_3+uu_1)\frac {\partial g^1}{\partial u}-\left(u_2+\frac{u^2}{2}\right)\frac {\partial g^1}{\partial v}-\nonumber\\ -\sum_{j=1}^5\frac {\partial g^1}{\partial u_j}\left(\frac{d}{dx}\right)^j(u_3+uu_1) 
=-\left(u_1g^1+u\frac {d g^1}{d x}+\frac {d^3 g^1}{d x^3}\right). \nonumber
\end{eqnarray}
This equality means that the function
\begin{eqnarray}
S^2(t,x,u,u_1,\dots,u_4,v)=-\frac{5uu_3}{3}-\frac{10u_1u_2}{3}-\frac{5u^2u_1}{6}+\nonumber\\ +
\frac{1}{3t}\left\{x(u_3+uu_1)+4u_2+\frac{4u^2}{3}+\frac{u_1v}{3}\right\} \nonumber
\end{eqnarray}
satisfies the following condition:
\begin{eqnarray}
\frac {\partial S^2}{\partial t}-\frac {\partial S^2}{\partial u}(u_3+uu_1) -\frac {\partial S^2}{\partial v}(u_2+u^2/2)+
\nonumber \\+\sum_{l=1}^{4}\left[\frac {\partial S^2}{\partial u_l}\left(\frac{d}{dx}\right)^l(u_3+uu_1)\right] 
=-\left(\frac{d}{dx}\right)^{5}(u_3+uu_1). \nonumber
\end{eqnarray}
Combining this with obvious identity
$$
\frac {\partial S^1}{\partial t}-\frac {\partial S^1}{\partial u}(u_3+uu_1) =
-\frac{d}{dx}(u_2+\frac{u^2}{2})
$$
for the function $S^1=u$ and using Definition 2, we see that the system of ODE consisting
of (\ref{npk}) and the equation
\begin{equation}\begin{split}
u_5=-\frac{5uu_3}{3}-\frac{10u_1u_2}{3}-\frac{5u^2u_1}{6}+\\
\label{infnk}+\frac{1}{3t}\left[(u_3+uu_1)+4u_2+\frac{4u^2}{3}+\frac{u_1v}{3}\right] 
\end{split}\end{equation}
determines an invariant manifold of the system of evolution equations
(\ref{ntk}) and (\ref{KdV}). Hence, by Theorem\,\ref{t1}, for any six complex constants
$a$ and $b_i$ $(i=0,\dots, 4)$ there is a unique simultaneous  analytic solution of the system
of evolution equations (\ref{ntk}), (\ref{KdV})  and the system of ODE (\ref{npk}), (\ref{infnk})
in a neighborhood  (\ref{inip}) of any fixed point $(t_0,x_0)$ with initial conditions
\begin{equation}\label{inisn}
v(t_0,x_0)=a,\qquad u_i(t_0,x_0)=b_i \quad (i=0,\dots, 4).
\end{equation}
Using the uniqueness of this solution, the result of \cite{Dom} on the meromorphic
extension of its component~$u$ (a solution of the KdV equation (\ref{KdV})) to the strip 
(\ref{Polos}), and the shape of the ODE (\ref{infnk}), we obtain the following theorem.

\begin{theorem}\label{t4}
For any values of the complex constants  $x_0$, $t_0\neq 0$, $a$ and $b_i$ $(i=0,\dots, 4)$,
the solution of the system of ODE (\ref{npk}), (\ref{infnk}) with $t=t_0$ and 
initial conditions (\ref{inisn}) is meromorphic with respect to $x$ on the whole complex
plane~$\mathbb{C}$. 
\end{theorem}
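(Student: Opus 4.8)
The plan is to apply verbatim the scheme already used for Theorems~\ref{t2} and~\ref{t3}, now to the invariant manifold of the pair of evolution equations (\ref{ntk}), (\ref{KdV}) exhibited in the paragraphs preceding the statement. First I would invoke Theorem~\ref{t1}: since the system of ODE (\ref{npk}), (\ref{infnk}) was shown there to determine an invariant manifold of the system (\ref{ntk}), (\ref{KdV}), Theorem~\ref{t1} produces, for every choice of the complex numbers $a,b_0,\dots,b_4$ and every base point $(t_0,x_0)$ with $t_0\neq0$, a unique simultaneous analytic solution $(u(t,x),v(t,x))$ in a bidisk (\ref{inip}) satisfying the initial conditions (\ref{inisn}); in particular its component $u$ is an analytic solution of the KdV equation (\ref{KdV}) in that bidisk.

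The next step is to promote analyticity in the bidisk to meromorphy in the strip. By the result of~\cite{Dom}, the component $u$ --- being a local holomorphic solution of KdV --- extends to a function meromorphic in $x$ on the whole strip (\ref{Polos}), and hence so do all of its $x$-derivatives $u_1,\dots,u_5$. This is where the shape of (\ref{infnk}) enters: the equation (\ref{infnk}) is affine in $v$ with coefficient $u_1/(9t)$, so (using $t_0\neq0$) it can be solved for $v$, giving
\[
v=\frac{9t}{u_1}\left(u_5+\frac{5uu_3}{3}+\frac{10u_1u_2}{3}+\frac{5u^2u_1}{6}\right)-\frac{3}{u_1}\left(x(u_3+uu_1)+4u_2+\frac{4u^2}{3}\right).
\]
The right-hand side is a rational expression in $t$, $x$ and the functions $u,u_1,\dots,u_5$, all of which are meromorphic on the strip; therefore, provided the meromorphic function $u_1$ is not identically zero, this right-hand side defines a meromorphic function of $(x,t)$ on the strip $\{|t-t_0|<\delta_1\}$, and by the identity theorem it coincides with $v$ near $(t_0,x_0)$, hence supplies the required meromorphic continuation of $v$. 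The exceptional case $u_1\equiv0$ is harmless: then $u$ is independent of $x$, (\ref{KdV}) forces $u$ to be a constant, and (\ref{infnk}) then forces $u\equiv0$, so that $v$ is constant by (\ref{npk}) and (\ref{ntk}); both components are then trivially meromorphic.

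Finally I would restrict to $t=t_0$. For fixed $t_0\neq0$ the solution of the ODE system (\ref{npk}), (\ref{infnk}) with initial conditions (\ref{inisn}) is a solution of a sixth-order system of ODE in $x$ with analytic right-hand side and prescribed Cauchy data at $x_0$, so by uniqueness for such systems it coincides with the restriction to the line $t=t_0$ of the simultaneous solution $(u,v)$ constructed above. Since $u$ and $v$ are meromorphic on the strip and neither is identically infinite along $t=t_0$ (their values at $x_0$ are the finite numbers $b_0$ and $a$), the restrictions $u(t_0,\cdot)$ and $v(t_0,\cdot)$ are meromorphic on the whole plane $\mathbb{C}$. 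As $a,b_0,\dots,b_4$ were arbitrary, every solution of (\ref{npk}), (\ref{infnk}) with $t=t_0$ is globally meromorphic in $x$, which is the assertion. I do not expect a genuine obstacle: the invariant-manifold identity and the meromorphic extendibility of KdV solutions are already in hand, so the only new ingredient is the elementary observation that (\ref{infnk}) can be solved for $v$, and the only thing to monitor is the vanishing of the coefficients $u_1$ and $t$ --- the latter excluded by hypothesis, the former leading only to the trivial solution.
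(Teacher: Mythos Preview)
Your proposal is correct and follows essentially the same route as the paper: invoke Theorem~\ref{t1} for a simultaneous local solution, extend the KdV component $u$ meromorphically to the strip via~\cite{Dom}, and then exploit ``the shape of the ODE~(\ref{infnk})'' (namely that it is affine in $v$ with coefficient $u_1/(9t)$) to recover $v$ meromorphically. Your write-up is in fact more explicit than the paper's one-line justification, in that you spell out the formula for $v$, treat the degenerate case $u_1\equiv0$, and address why restriction to the line $t=t_0$ yields a genuinely meromorphic function.
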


An analogous theorem can be proved by the same scheme for the system of ODE
consisting of (\ref{npk}) and the stationary part
\begin{equation}\label{masskl}G^2(t,x,u,\dots,u_5,v)=0\end{equation}
of the following symmetry of the KdV equation (\ref{KdV}):
\begin{eqnarray}
u'_{\tau_{mascl}}=G^2(t,x,u,\dots,u_5,v)=G^1(t,x,u,\dots,u_5,v) +\nonumber\\
+k_1(3tu_3+xu_1+2u)+k_2(1-tu_1).
\nonumber\end{eqnarray}
This symmetry generalizes the master symmetry (\ref{mask}) to the case of arbitrary
complex constants $k_1$ and $k_2$.

\begin{theorem}\label{t5}
Let $t_0\neq 0$ and $k_1,k_2$ be arbitrary complex para\-meters. Then, for any values of the
complex constants  $x_0$, $a$ and $b_i$ $(i=0,\dots, 4)$, the solution of the system of ODE
(\ref{npk}), (\ref{masskl}) with $t=t_0$ and initial conditions~(\ref{inisn}) is meromorphic 
with respect to~$x$ on the whole complex plane~$\mathbb{C}$. 
\end{theorem}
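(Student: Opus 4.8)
The plan is to follow, almost step for step, the argument that precedes Theorem~\ref{t4}, exploiting that $G^2$ differs from the master symmetry $G^1$ only by the two extra terms $k_1(3tu_3+xu_1+2u)$ and $k_2(1-tu_1)$, which are right-hand sides of classical symmetries of KdV (the scaling and Galilean flows) and, decisively, do not involve the non-local variable~$v$. The first step is to carry the commutation identity~(\ref{uskok}) over from $G^1$ to $G^2$. Both sides of~(\ref{uskok}) are linear in the symmetry's right-hand side, and for a $v$-free $G$ that identity reduces exactly to the classical-symmetry condition of Definition~\ref{d1} for $G$ with respect to~(\ref{KdV}); since the two added terms satisfy the latter, adding $k_1$ and $k_2$ times those conditions to~(\ref{uskok}) gives
\[
\frac{\partial G^2}{\partial t}-(u_3+uu_1)\frac{\partial G^2}{\partial u}-\left(u_2+\frac{u^2}{2}\right)\frac{\partial G^2}{\partial v}-\sum_{j=1}^5\frac{\partial G^2}{\partial u_j}\left(\frac{d}{dx}\right)^j(u_3+uu_1)=-u_1G^2-u\frac{dG^2}{dx}-\frac{d^3G^2}{dx^3}.
\]

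Next, since $t_0\neq0$, I would divide $G^2=0$ by $-3t$ on a neighbourhood of $t_0$. The added terms have top derivative $u_3$, so they leave the coefficient of $u_5$ untouched, and this again yields an equation $u_5=\widetilde{S}^{\,2}(t,x,u,u_1,\dots,u_4,v)$ with $\widetilde{S}^{\,2}$ analytic near the relevant points and, just as in~(\ref{infnk}), depending on $v$ only through the single term $\tfrac1{9t}\,u_1v$. Combining the displayed identity with the obvious identity $\partial S^1/\partial t-(u_3+uu_1)\,\partial S^1/\partial u=-\tfrac{d}{dx}(u_2+u^2/2)$ for $S^1=u$, precisely as in the derivation of~(\ref{infnk}), and invoking Definition~\ref{d2}, I conclude that the system~(\ref{npk}), $u_5=\widetilde{S}^{\,2}$ determines an invariant manifold of the evolution system~(\ref{ntk}), (\ref{KdV}). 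Theorem~\ref{t1} then gives, for each choice of the data~(\ref{inisn}), a unique simultaneous analytic solution of these four equations in a bidisk~(\ref{inip}) about $(t_0,x_0)$, whose $u$-component, being a local holomorphic solution of KdV, extends by~\cite{Dom} to a function meromorphic in $x$ on the whole strip~(\ref{Polos}).

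It remains to recover the meromorphy of $v$, and I expect this to be the only point needing care. Solving $u_5=\widetilde{S}^{\,2}$ for $v$ gives $v=\dfrac{9t}{u_1}\bigl(u_5-(\widetilde{S}^{\,2}\text{ with its }v\text{-term removed})\bigr)$, so on the slice $t=t_0$ the function $v$ is a rational expression in $x$ and in $u,u_1,\dots,u_5$, which are already globally meromorphic in~$x$; hence $v(\cdot,t_0)$ continues to a meromorphic function on all of $\mathbb{C}$, and by the identity theorem the continued pair $(u,v)$ still satisfies~(\ref{npk}), (\ref{masskl}) with $t=t_0$. The essential point is to use this algebraic expression for $v$ rather than $v=\int u\,dx$ directly, since integration could a priori introduce logarithmic branch points at poles of $u$ with nonzero residue; the invariant-manifold equation is precisely what rules this out. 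The only case this does not cover directly, $u_1\equiv0$, is trivial: then $u$ is an $x$-independent constant and $v$ is affine in~$x$, hence entire. Checking the $G^2$-identity and the invariant-manifold property themselves is routine given the linearity of all the relations involved, so the real labour is the transcription of the proof of Theorem~\ref{t4}.
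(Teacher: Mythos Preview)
Your proposal is correct and follows exactly the scheme the paper indicates: the paper gives no separate proof of Theorem~\ref{t5}, only the remark that ``an analogous theorem can be proved by the same scheme'' as Theorem~\ref{t4}, and your write-up is a faithful transcription of that scheme with the harmless classical-symmetry terms $k_1(3tu_3+xu_1+2u)$ and $k_2(1-tu_1)$ carried along. Your explicit handling of the meromorphy of $v$ (solving the invariant-manifold equation for $v$ as a rational expression in $x,u,\dots,u_5$, with the degenerate case $u_1\equiv0$ treated separately) spells out what the paper compresses into the phrase ``the shape of the ODE~(\ref{infnk})''.
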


\begin{remark}\label{r8}
{\rm
Recently Adler \cite{Adler} described a simultaneous solution of the system of evolution equations
(\ref{ntk}),  (\ref{KdV}) and the system of ODE (\ref{npk}), (\ref{masskl}). This solution provides
 an example
of an exact solution of the so-called first Gurevich--Pitaevskii problem \cite[\S8]{NovD} 
concerning solutions
of the KdV equation (\ref{KdV}) with step-like initial data possessing different constant limits as
 $x\to \pm \infty$. 
}
\end{remark}

Repeating the proofs of Theorems \ref{t2} and \ref{t5}, we arrive at the following result 
concerning the whole hierarchy of higher analogues of the system of ODE (\ref{npk}) and
\begin{eqnarray}
 G^2(t,x,u,\dots,u_5,v)+ \frac{d}{dx}\left(L_n(u,u_1,...u_{2n})+\right.\nonumber \\ \label{massh} \\ 
\left.+\sum_{j=1}^{n-1} \mu_j L_j(u,u_1,...,u_{2j})\right)=0, \nonumber\end{eqnarray}
where $L_j(u,u_1,...u_{2j})$ are the Lenard polynomials (\ref{Ierlt}).

\begin{theorem}\label{t6}
Let $t_0$, $k_1,k_2$ and $\mu_j$ be arbitrary complex para\-meters. Then, 
for any values of the
complex constants  $x_0$, $a$ and $b_i$, the solution of the system of ODE 
(\ref{npk}), (\ref{massh}) with $t=t_0$ and $n>2$ satisfying the initial conditions
$$ 
v(t_0,x_0)=a,\qquad u_i(t_0,x_0)=b_i \quad (i=0,\dots, 2n)
$$
is meromorphic with respect to~$x$ on the whole complex plane~$\mathbb{C}$. 
\end{theorem}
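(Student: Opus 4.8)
The plan is to run the same three-step scheme as in the proofs of Theorems~\ref{t2}, \ref{t4} and \ref{t5}, now carrying along the extra Lenard terms. The first step is to put the system (\ref{npk}), (\ref{massh}) into the normal form (\ref{invm}). Since $n>2$, the highest $x$-derivative of $u$ occurring in (\ref{massh}) is $u_{2n+1}$: it is produced only by $\frac{d}{dx}L_n(u,\dots,u_{2n})=u_{2n+1}+Q_n$, where $Q_n$ is a polynomial in $u,\dots,u_{2n-1}$, whereas the master-symmetry part $G^1$ contributes only $u_5$, the pieces $k_1(3tu_3+xu_1+2u)$ and $k_2(1-tu_1)$ only $u_3$ and $u_1$, and each $\frac{d}{dx}L_j$ with $j<n$ only $u_{2j+1}$. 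Hence (\ref{massh}) solves for $u_{2n+1}=S^1(t,x,u,u_1,\dots,u_{2n-1},v)$ with $S^1$ a polynomial in all of its arguments, in particular entire; together with $v_1=u$ from (\ref{npk}) this is a system (\ref{invm}) with $m=2$, $u^1=u$, $u^2=v$, $N_1=2n+1$, $N_2=1$. Because the coefficient of the top derivative $u_{2n+1}$ equals $1$ (it comes from $L_n$, not from the master symmetry), no division by $t$ is needed here, which is why -- in contrast with Theorems~\ref{t4} and \ref{t5} -- the hypothesis $t_0\neq0$ can be dropped.

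The second step is to check that this system determines an invariant manifold of the pair of evolution equations (\ref{ntk}), (\ref{KdV}), i.e. that the relations (\ref{kritin}) hold. For $i=2$ (the equation $v_1=u$, so $S^2=u^1$) this is the obvious identity already used for Theorems~\ref{t4} and \ref{t5}. For $i=1$ the key point is that, modulo (\ref{massh}) itself, the combination $G^2+\frac{d}{dx}\bigl(L_n+\sum_{j=1}^{n-1}\mu_jL_j\bigr)$ satisfies a commutation identity of the type (\ref{uskok}): for the $G^1$-part this is exactly (\ref{uskok}) from~\cite{Ibsh}; for the local pieces $k_1(3tu_3+xu_1+2u)$, $k_2(1-tu_1)$ and the terms $\frac{d}{dx}L_j$ it expresses that these are (classical, respectively higher) symmetries of KdV -- the latter already invoked for Theorem~\ref{t2} -- and by the remark after Definition~\ref{d1} these combine linearly. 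Rearranging this identity into (\ref{kritin}) for $S^1$ proceeds exactly as in the passage from (\ref{uskok}) to the invariance condition preceding Theorem~\ref{t4}. I expect this bookkeeping -- assembling (\ref{kritin}) for the combined flow out of the master-symmetry identity and the classical commutation relations -- to be the main, if routine, obstacle.

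Granting this, Theorem~\ref{t1} gives, for any complex $a,b_0,\dots,b_{2n}$ and any $(t_0,x_0)$, a unique simultaneous analytic solution $(u(t,x),v(t,x))$ of (\ref{ntk}), (\ref{KdV}), (\ref{npk}), (\ref{massh}) with the prescribed initial data in some bidisk (\ref{inip}); by~\cite{Dom} its component $u$, a local holomorphic solution of the KdV equation, extends to a function meromorphic in $x$ on the whole strip (\ref{Polos}). It remains to recover the meromorphy of $v$. Here one uses that $v$ enters (\ref{massh}) only through the single term $\tfrac13u_1v$ (it does not appear in $k_1(3tu_3+xu_1+2u)$, in $k_2(1-tu_1)$, in the $\mu_j$-terms, or anywhere else in $G^2$), so (\ref{massh}) reads $\tfrac13u_1v+R(t,x,u,u_1,\dots,u_{2n+1})=0$ with $R$ independent of $v$, whence
\[
v=-\frac{3}{u_1}\,R(t,x,u,u_1,\dots,u_{2n+1})
\]
is, for $t=t_0$, a rational expression in $x$ and the meromorphic functions $u,u_1,\dots,u_{2n+1}$, hence meromorphic in $x$ provided $u_1\not\equiv0$. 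The meromorphic function $u_1$ vanishes identically only when $u$ is constant in $x$, which by the explicit form of $S^1$ forces $b_1=\dots=b_{2n}=0$ and $\tfrac43b_0^2+2k_1b_0+k_2=0$; in that exceptional case $v_x=u\equiv b_0$ makes $v$ affine in $x$, hence again meromorphic. Restricting to $t=t_0$ then yields the claimed meromorphy of the solution of (\ref{npk}), (\ref{massh}) on all of $\mathbb{C}$.
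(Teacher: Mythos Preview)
Your proposal is correct and follows exactly the scheme the paper indicates (``Repeating the proofs of Theorems~\ref{t2} and~\ref{t5}''): you combine the invariant-manifold argument for the master symmetry with the higher Lenard symmetries, invoke Theorem~\ref{t1} and the meromorphic extension result of~\cite{Dom}, and then read off the meromorphy of $v$ from the shape of~(\ref{massh}). Your explicit remarks on why the hypothesis $t_0\neq0$ disappears for $n>2$ (the leading coefficient of $u_{2n+1}$ is $1$, not a multiple of $t$) and on recovering $v$ from the single occurrence $\tfrac{1}{3}u_1v$ make precise what the paper leaves to the phrase ``the shape of the ODE''; the handling of the degenerate case $u_1\equiv0$ is a nice completeness check that the paper omits.
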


\section{\label{sec:level4}Meromorphy of solutions of the stationary parts of symmetries for the complexified
non-linear Schr\" odinger equation} 

Along with the hierarchy (\ref{Ptooi}) of the second Painlev\' e equation~(\ref{PII})
(this hierarchy is related to symmetries of the mKdV equation (\ref{MKdV})), there is
another its hierarchy, which seems to be first described in~\cite{Kit}. The $n$\,th
element of this hierarchy is a system of two ODE of the form
\begin{eqnarray}
\sum_{j=2}^n\mu_jg_j[p,q]-2tp_1-ixp=0, \nonumber \\ \label{IeIIPII} \\ \sum_{j=2}^n\mu_j\psi_j[p,q]-2tq_1+ixq=0,
\nonumber \end{eqnarray}
where  $\mu_j$ are arbitrary complex constants, $\mu_n\neq 0$, $n>1$. It is obtained
by adding the stationary part of the classical Galilean symmetry 
$$
p'_{\tau_G}=-2tp_1-ixp, \qquad q'_{\tau_G}=-2tq_1+ixq
$$
of the complexified NLS equation (\ref{NS}) to a linear combination of the stationary parts
of autonomous symmetries of~(\ref{NS}),
\begin{equation}\label{symNS}
p'_{\tau_n}=g_j[p,q], \qquad q'_{\tau_G}=\psi_j[p,q],\end{equation}
whose right-hand sides
\eqa{\label{AKNSh}
g_0[p,q]=ip, \quad \psi_0[p,q]=-iq, \\
g_1[p,q]=p_1, \quad \psi_1[p,q]=q_1, \\  
g_2[p,q]=-ip_2+2ip^2q, \quad \psi_2[p,q]=iq_2-2ipq^2, \\
g_3[p,q]=-p_3+6pqq_1, \quad \psi_2[p,q]=-q_3+6pqp_1, \quad \dots,}
are related \cite{Zh} by the recursive formula
\begin{equation} \begin{split}\label{Zhib}
 g_{j+1}=-i\frac{d}{dx}g_j+2ip\left(\frac{d}{dx}\right)^{-1}(p\psi_j+qg_j),
\\ \psi_{j+1}=i\frac{d}{dx}\psi_j-2iq\left(\frac{d}{dx}\right)^{-1}(p\psi_j+qg_j).
\end{split} \end{equation}
Here the integrals $\left(\frac{d}{dx}\right)^{-1}(p\psi_j+qg_j)$ are uniquely determined by 
requiring that they are equal to zero for $p=q=0$. It is easily provable by induction that the functions
$g_j[p,q]$ and $\psi_j[p,q]$ are polynomials of the form
 
\begin{equation} \begin{split}\nonumber g_j[p,q]=A_jp_j+F_j(p_{j-2}, q_{j-2},\dots, p,q),\\
\psi_j[p,q]=B_jq_j+G_j(p_{j-2}, q_{j-2},\dots, p,q)
\end{split} \end{equation}
for some non-zero constants $A_j,B_j$ such that the expressions
$p\psi_j+qg_j$ and $p\frac{d\psi_j}{dx}-q\frac{dg_j}{dx}$ can be written as the derivatives
\begin{equation}\begin{split}\nonumber
 p\psi_j+qg_j=\frac{d}{dx}\left(P_j(p_{j-1},q_{j-1},\dots,p,q)\right), \\ 
p\frac{d\psi_j}{dx}-q\frac{dg_j}{dx}=\frac{d}{dx}\left(Q_j(p_{j},q_{j},\dots,p,q)\right)\end{split} \end{equation}
of some polynomials $P_j(p_{j-1},q_{j-1},\dots,p,q)$ and $Q_j(p_j,q_j,\dots,p,q)$.

By shifting the independent variables $x,t$, the lowest representative of this hierarchy 
of systems of ODE can be written without loss of generality in the form
\eqa{\label{sisp}
-i\mu_2[p_2-2p^2q]-2tp_1-ixp=0, \\ i\mu_2[q_2-2pq^2]-2tq_1+ixq=0.}
The familiar change of variables
\begin{equation}\label{chdo}
k=pq,\qquad l=\frac{q'_x}{q}
\end{equation}
 (see, for example, the formula (1.5) in~\cite{dobish})
transforms the complexified NLS equation (\ref{NS}) to the system of long water waves with
purely imaginary time,
\begin{equation}\label{DDV}
k'_t=i(2k'_1 l+2kl'_1-k_{2}),\qquad l'_t=i(l_{2}+2ll'_1-2k'_1).
\end{equation}
(The change of variables (\ref{chdo}) actually sends the solutions of the last system
to solutions of a somewhat more general system 
$$
p'_t=-ip_{2}+2ip^2q +b(t)p, \qquad
q'_t=iq_{2}-2ipq^2-b(t)q.
$$
But the simple transformation
$$
p(t)=\exp\left(\int_{t_*}^tb(\tau)\,d\tau\right)P, \qquad q(t)=\exp\left(-\int_{t_*}^tb(\tau)\,d\tau\right)Q
$$
again reduces them to solutions $(P,Q)$ of the complexified NLS equation (\ref{NS}).) 

The change of variables (\ref{chdo}) transforms the system of ODE (\ref{sisp}) 
to the following system of two first-order ODE depending on a complex parameter~$\gamma$:
\begin{equation}\label{ddvt}
k_1=2kl+\frac{2itk}{\mu_2}-\gamma \qquad l_1=-l^2-\frac{2itl}{\mu_2}+2k+\frac{ix}{\mu_2}.
\end{equation}
Indeed, it follows from (\ref{sisp}) that
$$ 
(p'_1q-q'_1p)'_x=\frac{2itk_1}{\mu_2}
$$
or, after integration,
$$
p'_1q-q'_1p=\frac{2itk}{\mu_2}-\gamma. 
$$
Here $\gamma$ is independent of $t$ since $(p,q)$ is a solution of the complexified
NLS equation~(\ref{NS}). Hence the functions (\ref{chdo}) constructed from any solution 
of~(\ref{sisp}) satisfy the first equation in~(\ref{ddvt}).  A direct verification shows that they
also satisfy the second equation in~(\ref{ddvt}). In its turn, the system~(\ref{ddvt}) 
is related by the formulas
$$ 
k=-v, \qquad l=-u-\frac{it}{\mu_2}
$$
with the equivalent system of ODE
$$
u_1=u^2+2v-\frac{ix}{\mu_2}+\frac{t^2}{\mu_2^2},\qquad v'_1=-2uv+\gamma, 
$$
which coincides with the system of ODE (69), (70) in \cite[\S5.1]{CJP}. The component~$u$
of a solution of this system satisfies the ODE
$$
u_{2}=2u^3+2u\left(\frac{t^2}{\mu_2^2}-\frac{ix}{mu_2}\right)-\frac{i}{\mu_2}+2\gamma,
$$
which is related by the trivial transformations 
$$
 w=\left(\frac{i\mu_2}{2}\right)^{1/3}u, \qquad z=\left(\frac{i\mu_2}{2}\right)^{-1/3}\left(x+
\frac{it^2}{\mu_2}\right)
$$
to the second Painlev\' e equation~(\ref{PII}). Hence the hierarchy of systems of ODE (\ref{IeIIPII})
may indeed be regarded as a hierarchy of higher analogues of~(\ref{PII}). (Its solutions are possibly
related by the change of variables (\ref{chdo}) to solutions of another second Painlev\' e  hierarchy
introduced in~\cite{CJP}.)


Being the stationary part of a symmetry of (\ref{NS}), the system of ODE (\ref{IeIIPII}) 
determines an invariant manifold of~(\ref{NS}). By Theorem \ref{t1}, for any complex constants
$a_{00},..., a_{(2n-1)0}$ and $b_{00},..., b_{(2n-1)0}$ there is a unique simultaneous 
holomorphic solution of the system of ODE (\ref{IeIIPII}) and the evolutionary system (\ref{NS})
in a neighborhood  (\ref{inip}) of an arbitrary point $(t_0,x_0)\in\mathbb{C}^2$ with initial conditions
\begin{equation} \label{inipNS}
p_j(t_0,x_0)=a_{j0},\qquad q_j(t_0,x_0)=b_{j0} \qquad (j=0,\dots,n-1).
\end{equation}
By~\cite{Dom}, every holomorphic solution of (\ref{NS}) in the bidisk (\ref{inip}) extends 
meromorphically in $x,t$ to the strip~$|t-t_0|<\delta_1$. Hence the following theorem holds.

\begin{theorem}\label{t7}
For any complex constants $t_0$, $x_0$, $a_{00}$,$...$,$a_{(2n-1)0}$ and 
$b_{00}$,$...,$ $b_{(2n-1)0}$, the solution $(p,q)$ of the system of ODE (\ref{IeIIPII}) with
$t=t_0$ and 
initial conditions (\ref{inipNS}) is meromorphic with respect to~$x$ on the whole complex
plane~$\mathbb{C}$. 
\end{theorem}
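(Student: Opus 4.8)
The plan is to follow the same scheme already used for Theorems~\ref{t2}--\ref{t6}: exhibit (\ref{IeIIPII}) as the stationary part of a symmetry of an evolution system integrable by the inverse scattering transform, use Theorem~\ref{t1} to upgrade each Cauchy datum of the ODE system to a genuine simultaneous holomorphic solution of the ODE system and the evolution system in a full bidisk, and finally import from \cite{Dom} the global meromorphic extendibility in~$x$ of every local holomorphic solution of the complexified NLS equation~(\ref{NS}). Concretely, the left-hand sides of the two equations in (\ref{IeIIPII}) arise by setting $p'_\tau=q'_\tau=0$ in the symmetry of (\ref{NS}) formed as the linear combination $\sum_{j=2}^{n}\mu_j(g_j,\psi_j)$ of the autonomous higher symmetries (\ref{symNS}) together with the classical Galilean symmetry $p'_{\tau_G}=-2tp_1-ixp$, $q'_{\tau_G}=-2tq_1+ixq$. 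Since a linear combination of symmetries is again a symmetry (Definition~\ref{d1} and the discussion following it), the remarks in Section~\ref{sec:level2} show that the vanishing of this combination determines an invariant manifold of (\ref{NS}), i.e.\ that (\ref{IeIIPII}) satisfies the criterion~(\ref{kritin}).

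The one point that really needs checking is that (\ref{IeIIPII}) can be written in the resolved form~(\ref{invm}) of Definition~\ref{d2}. Here I would invoke the structural facts recalled just after (\ref{Zhib}): $g_j[p,q]=A_jp_j+F_j(p_{j-2},q_{j-2},\dots,p,q)$ and $\psi_j[p,q]=B_jq_j+G_j(p_{j-2},q_{j-2},\dots,p,q)$ with constants $A_j,B_j\neq0$. Consequently the leading part of the first equation in (\ref{IeIIPII}) is $\mu_nA_np_n$ and of the second is $\mu_nB_nq_n$; these are decoupled at top order, so under the standing assumption $\mu_n\neq0$ one solves the first equation for $p_n$ and the second for $q_n$, obtaining polynomial (hence everywhere analytic) right-hand sides $S^1,S^2$ depending only on $p,\dots,p_{n-1},q,\dots,q_{n-1}$. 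Thus (\ref{IeIIPII}) is a system of the shape~(\ref{invm}) with $N_1=N_2=n$, and a full set of Cauchy data for it consists of the $2n$ numbers $a_{j0},b_{j0}$ with $0\le j\le n-1$ prescribed in (\ref{inipNS}).

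With these two observations the argument closes as before. By Theorem~\ref{t1}, for every point $(t_0,x_0)\in\mathbb{C}^2$ and every choice of the constants $a_{j0},b_{j0}$ there is a unique simultaneous holomorphic solution $(p(t,x),q(t,x))$ of (\ref{NS}) and (\ref{IeIIPII}) in some bidisk~(\ref{inip}) realising the initial conditions (\ref{inipNS}). By the theorem of \cite{Dom} on the complexified NLS equation, every holomorphic solution of (\ref{NS}) in such a bidisk extends to a function meromorphic in $x,t$ on the strip~(\ref{Polos}); restricting to the fixed value $t=t_0$ gives that both components are meromorphic in $x$ on the whole plane~$\mathbb{C}$. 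Finally, by the uniqueness part of Theorem~\ref{t1}, the $t=t_0$ slice of $(p,q)$ coincides with the solution of the ODE system (\ref{IeIIPII}) carrying the data (\ref{inipNS}), so that solution is globally meromorphic in~$x$, which is precisely the claim.

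I do not anticipate a substantive obstacle: the genuinely hard analytic ingredient — meromorphic continuation in the spatial variable of all local holomorphic solutions of (\ref{NS}) — is already established in \cite{Dom}, and Theorem~\ref{t1} is exactly the device that turns compatibility of the ODE system with the flow into a local simultaneous solution. The only step requiring a little care is the bookkeeping in the recursion (\ref{Zhib}): one must confirm that $g_j$ (resp.\ $\psi_j$) contains $p_j$ (resp.\ $q_j$) with a nonzero coefficient and involves no derivative of the other unknown of order exceeding $j-1$, so that (\ref{IeIIPII}) genuinely resolves into the form (\ref{invm}). The change of variables (\ref{chdo}), which relates the lowest member of the hierarchy to a (\ref{PII})-type equation, is helpful for recognising what these ODE are but plays no role in the meromorphy argument itself.
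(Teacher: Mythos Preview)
Your proposal is correct and follows essentially the same route as the paper: identify (\ref{IeIIPII}) as the stationary part of a symmetry of (\ref{NS}), invoke Theorem~\ref{t1} to obtain a simultaneous local holomorphic solution, and then import the meromorphic extension result for (\ref{NS}) from \cite{Dom}. You are in fact slightly more explicit than the paper in checking that (\ref{IeIIPII}) can be solved for $p_n,q_n$ via the leading terms $\mu_nA_np_n$, $\mu_nB_nq_n$, which is exactly what is needed to put the system in the form (\ref{invm}) required by Theorem~\ref{t1}.
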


\begin{remark}\label{r9}
{\rm After the reduction $p=\delta q^*$ with real $\delta$, some solutions of the ODE
$\beta(q_3-6\delta|q|^2q_1)-2tq_1-ixq=0$ in this hierarchy coincide \cite{Bisl}, \cite{Kitj} with
the Haberman--Sun special solution~\cite{Habs} of the NLS equation $-iq_t=q_2-2\delta|q|^2q$.
This solution tends to the Pearcey integral
$const\int_R\exp[-i(\beta\lambda^4+4t\lambda^2+2x\lambda]\,d\lambda$ as $\delta \to0$
and describes the influence of small non-linearities on the high-frequency asymptotics near
the cusps of caustics. Other solutions of this ODE with $\delta=-1$ give a
prin\-ci\-pal-or\-der description \cite{Bisk} of the influence of small dispersion 
on the processes of falling
self-steepening of pulses, which are typical for non-linear geometric optic approximations.
By \cite{kaubb}, special solutions of other representatives of the hierarchy of ODE (\ref{IeIIPII})
give a hierarchy of solutions of the NLS equation $-iq_t=q_2-2\delta|q|^2q$ which
describe the influence of small non-linearities on the high-frequency asymptotics under
surgeries of caustics.
} 
\end{remark}

Our scheme gives an equally simple way to prove the global meromorphy in~$x$
of the solutions of initial-value problems for the hierarchy of systems of ODE
\eqa{\label{IerPIV}
\sum_{j=0}^n\mu_jg_j[p,q]+2it(p_2-2p^2q) -xp_1-p=0, \\ \sum_{j=2}^n\mu_j\psi_j[p,q]-2it[q_2-2pq^2]-xq_x-q=0,}
where $\mu_j$ are any complex constants and $\mu_n\neq 0$. These systems of ODE are obtained
by adding the stationary part of the classical scaling symmetry  
\eqa{\label{GNS}
p'_{\tau_{sc}}=2it(p_2-2p^2q) -xp_1-p, \\ q'_{\tau_{sc}}=-2it(q_2-2pq^2)-xq_x-q}
of the complexified NLS equation (\ref{NS}) to a linear combination of the stationary parts
of the autonomous symmetries~(\ref{symNS}). 

\begin{remark}\label{r10}
{\rm As already mentioned in \cite{Bvini}, it seems that the set of solutions of
(\ref{IerPIV}) with $n=3$ contains the special solution of the NLS equation
$-iq_t=q_2-2\delta|q|^2q$ described in~\cite{Sunh}. (When $\delta = 0$, this solution reduces
to $\alpha\int_R\exp[i(\beta\lambda^3-t\lambda^2+x\lambda]\,d\lambda$, where $\alpha$
and $\beta$ are real constants.)
}
\end{remark}

In view of the possibility of shifts in~$x$, there is no loss of generality in writing
the lowest representative of the hierarchy of ODE (\ref{IerPIV}) in the form of the system
\eqa{\nonumber
2it[p_2-2p^2q]-xp_1-p(1-i\mu_0)=0, \\ -2it[q_2-2pq^2]-xq_1-q(1+i\mu_0)=0.}
The change of variables (\ref{chdo}) transforms simultaneous solutions of the complexified
NLS equation (\ref{NS}) and this system to simultaneous solutions of the evolutionary system
(\ref{DDV}) and the system of ODE
\eqa{\label{ddch} 
2it[2k_1l+2k'_1-k_{2}]=-2k-xk'_1, \\ 2it[l_{2}+2ll'_1-2k'_{1}]=-l-xl_1.}
For $t\neq 0$, this system can be reduced to the fourth Painlev\' e equation; see the end of
section~1 in~\cite{dobish}. (Integrating the last equation in (\ref{ddch}) with respect to $x$, 
we have
\begin{equation}\label{conkl}
k =\frac{l_1+l^2}{2}+\frac{xl-a}{4it},
\end{equation}
where the integration constant $a$ is independent of~$t$ since the functions $k(t,x)$ and $l(t,x)$
satisfy the system~(\ref{DDV}). Substituting the  expression (\ref{conkl}) for~$k$ into the first
equation in (\ref{ddch}) and making the changes of variables
$$
z=\frac {(i)^{1/2} x}{2t^{1/2}}, \qquad w=\frac {(i)^{1/2} }{2t^{1/2}}\left[l-\frac{ix}{2t}\right],
$$
we arrive at the ODE
\begin{equation}\label{slch}
w'''_{zzz}=6w^2w'_z+12zww'_z+4w^2+4zw+4(z^2+a)w'_z,
\end{equation}
which can also be obtained by differentiation with respect to~$z$ of 
the canonical fourth Painlev\' e ODE
\begin{equation}\label{PIV} 
w''_{zz}=\frac{(w'_z)^2}{2w}+\frac{3w^3}{2}+4zw^2+2(z^2+a)w+\frac{b}{2w}
\end{equation}
depending on the two parameters $a$ and $b$.) Therefore the systems of ODE
(\ref{IerPIV}) may naturally be regarded as higher analogues of the fourth Painlev\' e
equation~(\ref{PIV}). This hierarchy differs from the hierarchy of higher analogues 
of~(\ref{PIV}) considered earlier in~\cite{CJP}, \cite{CJPd}, but
can possibly be related to~it by the formulas~(\ref{chdo}).

We omit the proof of the following theorem since it contains nothing principally new compared
to the proof of Theorem~\ref{t3}.

\begin{theorem}\label{t8}
For every $n>2$ and any complex constants $t_0$, $x_0$, $a_{00},\dots,a_{(2n-1)0}$ and
 $b_{00},\dots,b_{(2n-1)0}$, the solution $(p,q)$ of the system of ODE (\ref{IerPIV}) with $t=t_0$
and initial conditions~(\ref{inipNS}) is meromorphic with respect to~$x$ on the whole complex
plane~$\mathbb{C}$. 
\end{theorem}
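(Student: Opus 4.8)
The plan is to run, with the complexified NLS equation (\ref{NS}) in place of KdV, exactly the scheme used in the proofs of Theorems~\ref{t3} and~\ref{t7}. First I would record that, by its very construction, the system of ODE (\ref{IerPIV}) is the stationary part $p'_{\tau}=q'_{\tau}=0$ of the symmetry of (\ref{NS}) obtained by adding the classical scaling symmetry (\ref{GNS}) to a linear combination of the autonomous symmetries (\ref{symNS}). Since, as observed after Definition~\ref{d1}, such a linear combination is again the right-hand side of a symmetry of (\ref{NS}), the commutation identity of Definition~\ref{d1} holds for it; passing to the stationary part then yields the invariance relations (\ref{kritin}) on all solutions of (\ref{IerPIV}), so that (\ref{IerPIV}) determines an invariant manifold of (\ref{NS}) in the sense of Definition~\ref{d2}.

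Next I would check that (\ref{IerPIV}) is genuinely of the resolved form (\ref{invm}). Since $g_j[p,q]=A_jp_j+\dots$ and $\psi_j[p,q]=B_jq_j+\dots$ with nonzero constants $A_j,B_j$ (the dots denoting terms of lower differential order), and since the scaling part (\ref{GNS}) involves only the derivatives $p_2,q_2$ and lower ones, for $n>2$ the top-order derivatives occurring in (\ref{IerPIV}) are $p_n$ and $q_n$, entering with the nonzero coefficients $\mu_nA_n$ and $\mu_nB_n$; this is precisely where the hypotheses $n>2$ and $\mu_n\neq0$ are used (indeed, for $n=2$ the coefficient of $p_2$ in (\ref{IerPIV}) equals $2it-i\mu_2$ and need not be nonzero). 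Dividing by $\mu_nA_n$ and $\mu_nB_n$ puts (\ref{IerPIV}) in the form (\ref{invm}) with $N_1=N_2=n$ and right-hand sides that are polynomials in $x$, $t$, $p$, $q$ and their derivatives, hence analytic everywhere. Therefore Theorem~\ref{t1} applies: for arbitrary complex initial data (\ref{inipNS}) there is, in some bidisk (\ref{inip}) about any point $(t_0,x_0)\in\mathbb{C}^2$, a unique simultaneous holomorphic solution $(p,q)$ of (\ref{IerPIV}) and (\ref{NS}).

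Finally I would invoke~\cite{Dom}: every holomorphic solution of (\ref{NS}) in the bidisk (\ref{inip}) extends to a function of $x,t$ meromorphic in the strip $|t-t_0|<\delta_1$. Applying this to the solution $(p,q)$ above, and using that its meromorphic continuation still satisfies the ODE (\ref{IerPIV}), we conclude that for each fixed $t$ in that strip the unique solution of (\ref{IerPIV}) with the corresponding Cauchy data at $x=x_0$ is meromorphic on the whole $x$-plane; specializing to $t=t_0$ gives the assertion. The only step requiring genuine work is the leading-coefficient bookkeeping of the second paragraph that pins down the role of the hypothesis $n>2$; the one serious analytic ingredient, namely the global meromorphic extendibility in $x$ of local holomorphic solutions of (\ref{NS}), is supplied ready-made by~\cite{Dom} exactly as in the proof of Theorem~\ref{t7}, so nothing essentially new is involved.
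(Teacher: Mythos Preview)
Your proposal is correct and matches the paper's approach exactly: the paper itself omits the proof of Theorem~\ref{t8}, stating that it ``contains nothing principally new compared to the proof of Theorem~\ref{t3}'', and what you have written is precisely that scheme (Theorem~\ref{t1} plus the meromorphic-extension result of~\cite{Dom} for the complexified NLS system, as already invoked in Theorem~\ref{t7}). Your second paragraph, isolating the leading-coefficient check that explains the role of the hypothesis $n>2$ and $\mu_n\neq0$, is a useful clarification that the paper leaves implicit.
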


As in the case of non-local symmetries of the KdV equation~(\ref{KdV}), our general scheme
can be used to justify the meromorphy of all solutions of initial-value problems for systems of ODE
given by the stationary parts of non-local symmetries of the NLS system~(\ref{NS}). We illustrate
this by an example, which can obviously be  generalized in the same sense as Theorems~\ref{t5}
and~\ref{t6} generalize Theorem~\ref{t4}.

Consider the system of ODE related to the master symmetry 
\eqa{\label{symmaNS}
p'_{\tau_M}=G(t,x,p,q,v,p_1,p_2,p_3), \\ q'_{\tau_M}=\Psi(t,x,p,q,v,q_1,q_2,,q_3)}
of the complexified NLS equation~(\ref{NS}), where
\eqa{\label{odeG}
 G(t,x,p,q,v,p_1,p_2,p_3)=\\=2t(p_{3}-6pqp_1)+ix(p_{2}-2p^2q)+2ip_1-2ipv,}
\eqa{\label{odeP}
\Psi(t,x,p,q,v,q_1,q_2,q_3)=\\=2t(q_{3}-6pqq'_1)-ix(q_{2}-2q^2p)-2iq_1+2iqv, } 
\begin{equation}\label{xV} v'_x=pq,\end{equation}
\begin{equation}\label{tV}v'_t=i(q_1p-p_1q).\end{equation}
(The right-hand sides of this symmetry are obtained by formally applying the right-hand sides
of the recursion relations (\ref{Zhib}) to the right-hand sides of the equations of the  classical 
scaling symmetry (\ref{GNS}). The basic fact of compatibility of the system (\ref{symmaNS}) 
with the system of evolution equations~(\ref{NS}) means that the following identities hold for
the functions (\ref{odeG}) and (\ref{odeP}):
\eqa{\label{LNSG} 
\frac{\partial G}{\partial t}+i(p_2-2p^2q)\frac{\partial G}{\partial p}-i(p_2-2p^2q)\frac{\partial G}{\partial q}+\\+i(q_1p-p_1q)\frac{\partial G}{\partial v}
-i\sum_{j=1}^3\left[\frac{\partial G}{\partial p_j}\left(\frac{d}{dx}\right)^j(p_2-2p^2q)\right]=\\=-i\frac{d^2G}{dx^2}+4ipqG+2ip^2\Psi,}
\eqa{\label{LNSP} 
\frac{\partial \Psi}{\partial t}+i(p_2-2p^2q)\frac{\partial \Psi}{\partial p}+i(p_2-2p^2q)\frac{\partial \Psi}{\partial q}+\\+i(q_1p-p_1q)\frac{\partial \Psi}{\partial v}
+i\sum_{j=1}^3\left[\frac{\partial \Psi}{\partial p_j}\left(\frac{d}{dx}\right)^j(q_2-2q^2p)\right]=\\=i\frac{d^2\Psi}{dx^2}-4ipq\Psi-2iq^2G.}
This fact can also be verified directly.) Dividing the stationary part of the symmetry
(\ref{symmaNS}) by $2t$, we obtain the system of equations
\eqa{\label{OdunNS} G^1(t,x,p,q,v,p_1,p_2,p_3)=0, \\ \Psi^1(t,x,p,q,v,q_1,q_2,q_3,q,v)=0,}
where
\eqa{\nonumber G^1(t,x,p,q,p_1,p_2,p_3,v)=\\ =p_{3}-6pqp_1+\frac{ix(p_{2}-2p^2q)+2ip_1-2ipv}{2t}, 
\\ 
\Psi^1(t,x,p,q,q_1,q_2,q_3,v)=\\=q_{3}-6pqq_1+\frac{-ix(q_{2}-2q^2p)-2iq_1+2iqv}{2t}.}

It follows from the identities (\ref{LNSG}), (\ref{LNSP}) that on all solutions
of the closed system of ODE (\ref{xV}), (\ref{OdunNS}) we have
\eqa{\label{LNSo}  
\frac{dG^1}{dt}=-i\frac{d^2G^1}{dx^2}+4ipqG^1+2ip^2\Psi^1, \\ \frac{d\Psi^1}{dt}=i\frac{d^2\Psi^1}{dx^2}-4ipq\Psi^1-2iq^2G^1.}
In their turn, the equalities (\ref{LNSo}) mean that the expressions
\eqa{\nonumber
S^1(t,x,p,q,v,p_1,p_2)=G^1-p_3=\\=-6pqp_1+\frac{ix(p_{2}-2p^2q)+2ip_1-2ipv}{2t}}
and
\eqa{\nonumber
S^2(t,x,p,q,v,q_1,q_2)=\Psi^1-q_3=\\=-6pqq_1+\frac{-ix(q_{2}-2q^2p)-2iq_1+2iqv}{2t}}
satisfy the following identities on all solutions of the system (\ref{xV}), (\ref{OdunNS}):
\seqs {
 &\frac{\partial S^1}{\partial t}-i(p_2-2p^2q)\frac{\partial S^1}{\partial p}+\\&+i(q_2-2q^2p)\frac{\partial S^1}{\partial q}+i(q_1p-p_1q)\frac{\partial S^1}{\partial v}-\\ &-i\sum_{j=1}^2\left[\frac{\partial S^1}{\partial p_j}\left(\frac{d}{dx}\right)^j(p_2-2p^2q)-\frac{\partial S^1}{\partial q_j}\left(\frac{d}{dx}\right)^j(q_2-2q^2p)\right]=\\&=i\left(\frac{d}{dx}\right)^3(p_2-2p^2q),\\
 &\frac{\partial S^2}{\partial t}-i(p_2-2p^2q)\frac{\partial S^2}{\partial p}+\\&+i(p_2-2p^2q)\frac{\partial S^2}{\partial q}+i(q_1p-p_1q)\frac{\partial S^2}{\partial v}-\\&-i\sum_{j=1}^2\left[\frac{\partial S^2}{\partial p_j}\left(\frac{d}{dx}\right)^j(p_2-2p^2q)-\frac{\partial S^2}{\partial q_j}\left(\frac{d}{dx}\right)^j(q_2-2q^2p)\right]=\\&=-i\left(\frac{d}{dx}\right)^3(q_2-2q^2p).}
Moreover, the function $S^3(p,q)=pq$ satisfies the obvious equality
$$
\frac{\partial S^3}{\partial t}-i(p_2-2p^2q)\frac{\partial S^3}{\partial p}+i(q_2-2q^2p)\frac{\partial S^2}{\partial q}=i\frac{d}{dx}(q_1p-p_1q).
$$
By Definition~\ref{d2}, these three equalities mean that the system of three non-linear ODE
(\ref{xV}), (\ref{OdunNS}) determines an invariant manifold of the system of evolution equations
(\ref{NS}), (\ref{tV}). Combining this with Theorem~\ref{t1} and the result of~\cite{Dom} about
the global meromorphy of all local holomorphic solutions of the complexified NLS equation
(\ref{NS}), we deduce the following theorem by our standard scheme.

\begin{theorem}\label{t9}
For any values of the complex constants $t_0\neq 0 $, $x_0$, $a_i$ and $b_i$ $(i=0,1, 2)$,
the solution of the system of ODE (\ref{xV}), (\ref{OdunNS}) with $t=t_0$ and initial conditions
$$
p_i(t_0,x_0)=a_i, \qquad q_i(t_0,x_0)=b_i \qquad (i=0,1, 2)
$$
is meromorphic with respect to $x$ on the whole complex plane~$\mathbb{C}$.
\end{theorem}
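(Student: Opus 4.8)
The plan is to follow verbatim the ``standard scheme'' that was used for Theorems~\ref{t4}--\ref{t8}, since by the time we reach Theorem~\ref{t9} all the structural work has already been done in the paragraphs immediately preceding the statement. First I would invoke the identities~(\ref{LNSG}), (\ref{LNSP}) together with the three displayed identities for $S^1$, $S^2$ and $S^3(p,q)=pq$ which, via Definition~\ref{d2}, say precisely that the system of three non-linear ODE~(\ref{xV}), (\ref{OdunNS}) — with $N_1=N_2=3$ (the equations $G^1=0$, $\Psi^1=0$ resolved for $p_3$ and $q_3$) and $N_3=1$ (the equation $v_x=pq$) — determines an invariant manifold of the system of evolution equations~(\ref{NS}), (\ref{tV}). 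This is the hypothesis needed to apply Theorem~\ref{t1}.

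Next I would apply Theorem~\ref{t1} to this invariant manifold: for every point $(t_0,x_0)\in\mathbb{C}^2$ with $t_0\neq0$ (the restriction $t_0\neq0$ is forced because $S^1,S^2$ above contain $1/(2t)$, so analyticity of the right-hand sides near the base point requires $t_0\neq0$) and for any complex constants $a_i$, $b_i$ ($i=0,1,2$) and an arbitrary value $c$ of $v(t_0,x_0)$, there is a unique simultaneous analytic solution $(p,q,v)$ of the evolution system~(\ref{NS}), (\ref{tV}) and of the ODE system~(\ref{xV}), (\ref{OdunNS}) in a bidisk~(\ref{inip}) with $p_i(t_0,x_0)=a_i$, $q_i(t_0,x_0)=b_i$, $v(t_0,x_0)=c$. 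Then I would invoke the result of~\cite{Dom}: the component $(p,q)$, being a local holomorphic solution of the complexified NLS equation~(\ref{NS}), extends meromorphically in $(x,t)$ to the strip~$|t-t_0|<\delta_1$. From~(\ref{xV}) we get $v$ by integrating $pq$ in $x$, so $v$ is also meromorphic in the same strip; consequently the whole triple $(p,q,v)$ — which by uniqueness is the only solution of the ODE system~(\ref{xV}), (\ref{OdunNS}) with the prescribed Cauchy data at $x_0$ — is meromorphic on the whole $x$-plane for the fixed value $t=t_0$. Since the value $c=v(t_0,x_0)$ and the $a_i,b_i$ are arbitrary, this yields the claimed meromorphy for every solution of the Cauchy problem, completing the proof.

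The only genuinely non-routine point — and the one I would flag as the main obstacle — is verifying that the three displayed identities for $S^1$, $S^2$, $S^3$ really do hold \emph{on all solutions} of the closed ODE system~(\ref{xV}), (\ref{OdunNS}), i.e. that passing from the symmetry identities~(\ref{LNSG}), (\ref{LNSP}) to~(\ref{LNSo}) and then to the $S$-form is legitimate. This is the place where one uses that the master symmetry~(\ref{symmaNS}) genuinely commutes with~(\ref{NS}), (\ref{tV}) (which the paper notes ``can also be verified directly''), that dividing by $2t$ and subtracting $p_3$, $q_3$ does not destroy the invariant-manifold relation~(\ref{kritin}), and that the extra variable $v$ with its defining equations~(\ref{xV}), (\ref{tV}) is correctly incorporated as an $N_3=1$ component. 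Once that bookkeeping is checked, everything else is a mechanical application of Theorem~\ref{t1} and~\cite{Dom}, exactly as in the earlier theorems; this is why the authors write that the proof ``contains nothing principally new,'' and indeed the argument above is essentially the single sentence preceding the statement of Theorem~\ref{t9} expanded into its three constituent steps.
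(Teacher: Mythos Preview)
Your proposal is correct and follows exactly the paper's ``standard scheme'': the paper itself proves Theorem~\ref{t9} in a single sentence immediately before the statement, invoking Definition~\ref{d2} (via the three $S^i$ identities already displayed), Theorem~\ref{t1}, and the meromorphic extension result of~\cite{Dom} for the complexified NLS equation. Your three-step expansion (invariant manifold $\Rightarrow$ local simultaneous solution via Theorem~\ref{t1} $\Rightarrow$ meromorphic extension of $(p,q)$ via~\cite{Dom}, with $v$ recovered from~(\ref{xV})) is precisely that sentence unpacked, and your observation that $t_0\neq0$ is forced by the $1/(2t)$ in $S^1,S^2$ and that an initial value of $v$ must also be prescribed are both correct refinements of what the paper leaves implicit.
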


\begin{remark}\label{r11}
{\rm The set of solutions of the stationary part of the symmetry (\ref{symmaNS}) 
with the reduction  $p=-q^*$ contains the even (in~$x$) simultaneous solutions
of the equation 
\begin{equation}\label{samof}
2t(q_{3}-6pqq_1)-ix(q_{2}-2q^2p)-2iq_1-2iq\int_0^x|q|^2(t,\zeta)d\zeta 
\end{equation}
and the focusing NLS equation $q'_t=i(q_{xx}+2|q|^2)$ which universally describe 
the corrective influence of small dispersion on the self-focusing of non-linear
geometric optic approximations \cite{samof} in spatially one-dimensional problems with 
sufficiently small non-linearities. These simultaneous solutions of (\ref{samof}) and the focusing
NLS equation also arise in the description of rogue waves of infinite order~\cite{kilin}. 
As indicated in~\cite{kilin}, they
satisfy an ODE which is a particular case of the second term of the third Painlev\' e
hierarchy wriiten out in~\cite{sakka}. See also \cite{denb}--\cite{lizh} for the properties of these solutions and
their generalizations.}
\end{remark}

\section{\label{sec:level5}Systems of equations of Painlev\' e type that determine invariant manifolds 
of the Sawada--Kotera equation, and meromorphy of their solutions} 

It is known that the stationary parts of symmetries determine invariant manifolds
of systems of evolution equations. But the converse is not true in general. A striking
example of a hierarchy of such invariant manifolds was given by Bagderina~\cite{Bagd} in
the case of the Sawada--Kotera equation (\ref{SK}). The relation of these invariant manifolds
to symmetries of~(\ref{SK}) is highly non-trivial. The hierarchy is given by the ODE 
\begin{equation}\label{Bagd}
U_i(u,\dots,u_{2i})=0,
\end{equation}
where $U_i(u,\dots,u_{2i})$ is a sequence of polynomials
\eqa
{\nonumber U_0=u,\qquad U_1=u_2-6u^2,\\
U_3=u_6 -6(6uu_4 + 10u_1u_3 + 5(u_2)^2) + 360(u^2u_2 + u(u_1)^2 - u^4),\\ \dots.}
The formula 
\begin{equation} U_{i+1}=\left (\frac{d^2}{dx^2}\right)J_i-uJ_i,\label{Bagdi}\end{equation}
expresses them in terms of the polynomials $J_i(u,\dots,u_{2i})$ with
\eqa{\nonumber
J_{-1}=-1/6, \quad J_0=u,\quad J_2=u_4 - 30uu_2 + 60u^3, \\
J_3=u_6 -12(uu_4 + u_1u_3 + (u_2)^2) + 252 (2u^2u_2 + u(u_1)^2 - 2u^4),\\ \dots,}
which determine a familiar \cite{Fuo} infinite series of symmetries of the SK equation (\ref{SK}):
\begin{equation} u'_{\tau_i}=f_i=\left (\frac{d}{dx}\right)J_i.\label{Bagdi}\end{equation}
Note that $U_{3i}=J_{3i+1}=0$ for every $i>0$. These symmetries satisfy the recursion
relation $f_{i+3}=Rf_i$ with
\eqa{\nonumber
R= \left [\frac{d^2}{dx^2} - 24u - 12u_1\left(\frac{d}{dx}\right)^{-1}\right]
\left[\frac{d^2}{dx^2} -\right.\\ \left. -6u\right]\frac{d}{dx}\left[\frac{d^2}{dx^2} - 6u\right]\left(\frac{d}{dx}\right)^{-1}.}
This relation was found in~\cite{Sok}. 

We generalize Bagderina's construction to the case of a system of non-autonomous ODE.
Consider the simplest non-autonomous symmetry of the SK equation~(\ref{SK}), 
the classical scaling symmetry 
\eqa{\nonumber
u_{\tau_r}=5tu_t+xu_x+2u=\\=
5t[u_5-30uu_3-30u_1u_2+180u^2u_1]+xu_x+2u.}
(The stationary part  
\begin{equation} \label{rstS} 5t(u_5-30uu_3-30u_1u_2+180u^2u_1)+xu_1+2u=0\end{equation}
of this symmetry determines self-similar solutions of the Sawada--Kotera equation by 
the change of variables
$$
z=\frac{x}{t^{1/5}}, \qquad u=\frac{w(z)}{t^{2/5}}.
$$
This change of variables reduces the SK equation (\ref{SK}) and the ODE (\ref{rstS})
to the $t$-independent non-linear ODE
$$
5(w_5-30ww_3-30w_1w_2+180w^2w_1)+zw_1+2w=0
$$
on the function $w(z)$, where $w_j=\frac{d^jw}{dz^j}$.) The derivative in $x$ of the function
\begin{equation} \label{formB} 5t[u_4-30uu_2+60u^3]+{xu+v},\end{equation}
where \begin{equation}\label{PervS}v_x=u,\end{equation} 
coincides with the right-hand side of the scaling symmetry. Formally acting on (\ref{formB})
by the differential operator in the right-hand side of~(\ref{Bagdi}) and dividing the result by~$5t$,
we obtain a function
\eqa{\nonumber
 U_r=u_6-6(6uu_4+10u_1u_3+5u_2^2)+360(u^2u_2+uu_1^2-u^4)-\\-\frac{x(u_2-6u^2)+3u_1-6uv}{5t}.
}
A direct verification shows that the system of ODE consisting of the equations (\ref{PervS}) 
and the non-autonomous ODE
\begin{equation}\label{inaB} U_r=0 \end{equation}
determines an invariant manifold (in the sense of Definition~\ref{d2}) of the evolutionary system
consisting of the equation
\begin{equation}\label{evov}v_t=u_4-30uu_2+60u^3\end{equation}
and the SK equation (\ref{SK}).

The following theorem is proved along the same lines as in Theorems~\ref{t3}--\ref{t9}.

\begin{theorem}\label{t10}
For any complex numbers $t_0\neq 0$, $x_0$, $a$ and $b_i$ $(i=0,\dots, 4)$, 
the solution of the system
of ODE (\ref{PervS}), (\ref{inaB}) with $t=t_0$ and initial conditions~(\ref{inisn}) is meromorphic
with respect to~$x$ on the whole complex plane~$\mathbb{C}$.
\end{theorem}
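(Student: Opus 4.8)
The plan is to run, one final time, the scheme of Theorems~\ref{t3}--\ref{t9}. The structural input — already recorded in the text just above — is that the system of ODE (\ref{PervS}), (\ref{inaB}) determines an invariant manifold, in the sense of Definition~\ref{d2}, of the evolutionary system (\ref{evov}), (\ref{SK}): writing (\ref{PervS}) as $v_x=S^1$ with $S^1:=u$ and (\ref{inaB}) as $u_6=S^2(t,x,u,u_1,u_2,u_3,u_4,v)$, both right-hand sides are analytic near every point with $t=t_0\neq0$, and the compatibility relations (\ref{kritin}) hold there. First I would apply Theorem~\ref{t1}: for any complex values prescribed for $v$ and for $u_0,\dots,u_5$ at the point $(t_0,x_0)$, there is a unique simultaneous analytic solution $(u,v)$ of (\ref{evov}), (\ref{SK}), (\ref{PervS}), (\ref{inaB}) in some bidisk (\ref{inip}).

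Next I would observe that the component $u$ of this solution is a local holomorphic solution of the Sawada--Kotera equation (\ref{SK}). By the meromorphic-continuation theorem of~\cite{Dom}, which covers (\ref{SK}) (as recalled in the Introduction), $u$ extends to a function meromorphic in $(x,t)$ on the whole strip (\ref{Polos}); fixing $t=t_0$, it follows that $u(t_0,\cdot)$ and each of its $x$-derivatives $u_1,u_2,\dots,u_6,\dots$ are meromorphic on all of~$\mathbb{C}$.

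It then remains to recover the component $v$. If $u\equiv0$ — which, by the uniqueness in Theorem~\ref{t1}, occurs exactly when the prescribed values of $u_0,\dots,u_5$ all vanish — then $v$ is constant and both components are entire, so the claim is trivial. Otherwise $u\not\equiv0$, and solving (\ref{inaB}) algebraically for $v$, which is where the hypothesis $t_0\neq0$ enters, gives
$$
v=\frac{1}{6u}\left\{x(u_2-6u^2)+3u_1-5t_0\bigl[u_6-6(6uu_4+10u_1u_3+5u_2^2)+360(u^2u_2+uu_1^2-u^4)\bigr]\right\}.
$$
The right-hand side is a meromorphic function of $x$ on $\mathbb{C}$ — its only possible poles beyond those of $u$ and its derivatives are the isolated zeros of $u$ — and it coincides with the analytic $v$ on the bidisk wherever $u\neq0$; by the identity theorem $v(t_0,\cdot)$ therefore extends meromorphically to all of~$\mathbb{C}$. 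Since the Cauchy data at $(t_0,x_0)$ were arbitrary — and in particular every solution subject to (\ref{inisn}) is among those considered — this proves the theorem.

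I do not anticipate a real obstacle in the argument above: once the preparatory facts are in hand it proceeds mechanically, exactly as in Theorems~\ref{t3}--\ref{t9}. The only steps that require a moment's care are local: using $t_0\neq0$ when inverting (\ref{inaB}) for $v$, and treating the degenerate case $u\equiv0$ separately. The substantive, computational part of the matter sits outside this argument — it is the ``direct verification'' that the explicitly built polynomial $U_r$ (assembled from the Sawada--Kotera symmetry hierarchy, via the polynomials $J_i$, and from the stationary part of the classical scaling symmetry) satisfies the compatibility conditions (\ref{kritin}), i.e. that (\ref{PervS}), (\ref{inaB}) is indeed an invariant manifold of (\ref{evov}), (\ref{SK}). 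Granting that identity, as the preceding text does, meromorphy follows at once.
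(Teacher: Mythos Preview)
Your proposal is correct and follows the same scheme as the paper's own proof. The only point worth flagging is bibliographic: for the meromorphic-continuation step the paper does not cite~\cite{Dom} but instead exhibits the Lax representation $L_t=[A,L]$ of~(\ref{SK}) from~\cite{Dodd} (with operators of coprime orders $3$ and $5$) and invokes the main theorem of~\cite{Domok} (with the correction in~\cite{Domod}); your explicit algebraic recovery of $v$ from~(\ref{inaB}) is exactly what the paper compresses into ``the rest of the proof is clear''.
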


\begin{proof}. In~\cite{Dodd}, the SK equation (\ref{SK}) was represented in the Lax form
$L_t=[A,L]$ with differential operators
$$
 L =D^3+uD, 
A=-9D^5+80uD^3+80u_1D^2-(180u^2-60u_2)D
$$
of coprime orders, where $D=\frac{d}{dx}$. By the main theorem in~\cite{Domok} 
(with an irrelevant inaccuracy corrected in~\cite{Domod}), all the coefficients of~$L$ 
(in our case, $u(t,x)$) for any local holomorphic solution extend to globally meromorphic 
functions of the spatial variable~$x$. Since the system of ODE (\ref{PervS}), (\ref{inaB}) 
determines an invariant manifold of the system of evolution equations (\ref{evov}), (\ref{SK}),
Theorem~\ref{t1} yields a simultaneous solution of these two systems
 with initial conditions~(\ref{inisn}) in a neighborhood of the point $t=t_0, x=x_0$. The rest
of the proof is clear.
\end{proof}

\section{Conclusion}

We have demonstrated on a number of examples that the problem of rigorously proving
the global meromorphy of solutions of many non-linear ODE or their systems
can be solved easily and uniformly by using the general scheme described in the proofs 
of Theorems~\ref{t3}--\ref{t10}. Here are the key assumptions under which this scheme works.

1) The system of non-linear ODE under study is equivalent to a system of ODE 
of the form~(\ref{invm}) such that its right-hand sides are globally analytic in the independent 
variable~$x$ and it
determines an invariant manifold of some evolutionary system of the form~(\ref{sise}). 
Then Theorem~\ref{t1} guarantees the existence of a simultaneous holomorphic
solution in a neighborhood of any given point in~$\mathbb{C}^2$.

2) The solutions of (\ref{sise}) are expressible 
in terms of solutions of an evolutionary system (usually integrable by inverse scattering transform)
with the following meromorphic extension property. Any local holomorphic solution can be extended
globally in $x$ (and locally in other independent variables) to a strip in~$\mathbb{C}^2$.

The proof of this property is the main non-trivial point of our scheme. It is currently
unavailable for some integrable systems, say, for the Kadomtsev--Petviashvili equation.
However, it is available for the systems (\ref{KdV})--(\ref{SK}) and their hierarchies
as well as for many other IST-integrable systems~\cite{Dom}--\cite{Domp} including the
Boussinesq equation
\begin{equation} \label{buss} 
u''_{tt}=au''''_{xxxx} +buu''_{xx} +b(u'_x)^2 
\end{equation}
with any non-zero complex constants $a,b$ (see~\cite{ Domb}) and the vector and matrix
versions of the complexified NLS equation~\cite{Domk}. Our scheme gives a
simple way to prove the meromorphy of solutions of non-autonomous systems of ODE
that determine invariant manifolds of these evolutionary equations. In particular, it is plausible
that such a proof can be performed for Cosgrove's equation 
\eqa{\nonumber
 v''''_{xxxx}+12vv''_{xx}+6(v'_x)^2+ \frac{32}{3}v^3+a(vv''_{xx}+(v'_x)^2) - bx=0,
 \\ a,b= \mathrm{const},}
which is related to solutions of~(\ref{buss})
\cite{Cosg} -- \cite{Corp}, as well as for some matrix analogues of the second Painlev\' e
equation described in~\cite {Adsok} and their hierarchies.

\begin{acknowledgments}
The first named author was supported by the Russian Foundation for Basic Research,
grant no,~19-01-00474.
The authors are grateful to V. V. Sokolov and S. V. Khabirov for bringing the paper \cite{Bagd}
to our attention. We also profited from the consultations of O. V. Kaptsov and V. V. Sokolov 
concerning the existence of simultaneous solutions of systems of evolutionary equations 
and ODE determining their invariant manifolds.

\end{acknowledgments}


\end{document}